\newtheorem{lemma}[]{Lemma}
\newtheorem{assumption}{Assumption}
\newtheorem{remark}{Remark}
 \newcommand\subparagraph{}
\begin{document}
%
\title{Optimal Radio Access Technology Selection Algorithm for LTE-WiFi Network}
%
%
%
\author{Arghyadip Roy, Prasanna Chaporkar and Abhay Karandikar\\
Department
of Electrical Engineering\\ Indian Institute of Technology Bombay,
Mumbai, India, 400076\\
e-mail: {$\lbrace$arghyadip, chaporkar, karandi$\rbrace$}@ee.iitb.ac.in}

\maketitle

\begin{abstract}
A Heterogeneous Network (HetNet) comprises of multiple Radio Access Technologies (RATs) allowing a user to associate with a specific RAT and steer to other
RATs in a seamless manner. To cope up with the unprecedented growth of data traffic, mobile data can be offloaded to Wireless Fidelity (WiFi) in a Long 
Term Evolution (LTE) based HetNet.
In this paper, an optimal RAT selection problem is considered to maximize the total system throughput in an LTE-WiFi system with offload capability. 
Another formulation is also developed where maximizing the total system throughput is subject to a constraint
on the voice user blocking probability. It is proved that the optimal policies for the association and offloading of voice/data users contain 
threshold structures. Based on the threshold structures, we propose algorithms for the association and offloading of users in LTE-WiFi HetNet. 
Simulation results are presented to demonstrate the voice user blocking probability and the total system throughput performance of the proposed algorithms 
in comparison to another benchmark algorithm.   
\end{abstract}

\begin{IEEEkeywords}
User association, LTE-WiFi offloading, CMDP, Threshold policy.
\end{IEEEkeywords}

%

\section{Introduction}
To meet the ever-increasing Quality of Service (QoS) requirements of users, various Radio Access Technologies (RATs) have been standardized \cite{hetnet}. 
Each RAT has different characteristics regarding associated parameters like coverage and capacity. 
It has been predicted that by $2021$ monthly 
global mobile data traffic will exceed $49$ exabytes \cite{cisco}. This unprecedented growth in data traffic has become one of the serious challenges for 
cellular network operators. 
To address this issue, both from users' and network providers' point of view, it has become necessary that different RATs interwork with each other.
A wireless network where different RATs are present, and users can be associated and moved seamlessly from one RAT to another, is called a Heterogeneous 
Network (HetNet). In this paper, our aim is to determine the optimal RAT selection policy in a HetNet \footnote{The terminologies ``RAT selection'' 
and ``association'' has been used interchangeably throughout the paper.}. \par 
Due to the complementary characteristics of Third Generation Partnership Project (3GPP) Long Term Evolution (LTE) Base Stations (BSs) providing ubiquitous 
coverage and IEEE 802.11 \cite{ieee} based Wireless Local Area Network (WLAN) (also known as Wireless Fidelity (WiFi)) Access Points (APs) providing high 
bit rate capability in hot-spot areas, interworking between them \cite{interworking} offers an interesting solution. 
In areas where both LTE and WiFi coverage are present, a user can be associated with either of them. Moreover, data users can be steered from one RAT to 
another to achieve load balancing. This proposal, known as mobile data offloading, has been introduced in $3$GPP Release $12$ specifications 
\cite{interworking}. Since WiFi operates in unlicensed spectrum and most of the commercially available user equipments already have a dedicated WLAN 
interface, this proposal has become popular both with network operators and handset manufacturers.\par
For efficient utilization of both LTE and WiFi networks, it is necessary to take appropriate association and offloading decisions. RAT selection and 
offloading decisions can be made either at the user side or the network side. In user-initiated RAT selection schemes, there is no cooperation between 
LTE and WiFi networks, and users decide which RAT should be selected based on certain criteria. Since users individually take selfish RAT selection decisions
to maximize individual utility functions, this may not provide a globally optimum solution \cite{mikhail}-\cite{game}. To address this issue, a 
network-initiated RAT selection algorithm, which optimizes different network parameters, becomes necessary. \par
\begin{figure}[!h]
 \begin{center} 
\includegraphics[width=0.30\textwidth]{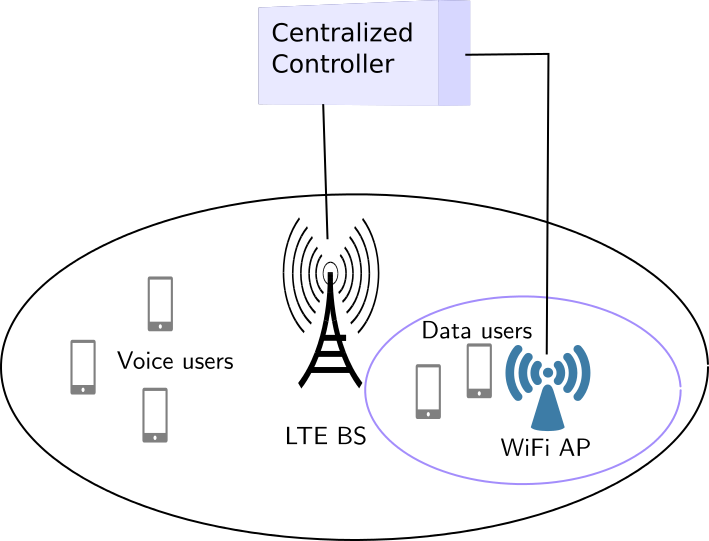}
\caption{LTE-WiFi heterogeneous network architecture.}
\label{fig:hetnet}
 \end{center}
\end{figure}
In this paper, we investigate an optimal association policy for an LTE-WiFi HetNet, as illustrated in Fig.\ref{fig:hetnet}. Network-initiated RAT selection 
and offloading decisions are taken by a centralized controller possessing an overall view of the network. We consider two types of users, viz., voice and data users, to be present inside the LTE-WiFi HetNet. 
We consider that voice users are always associated with LTE since unlike LTE, WiFi may not provide the required QoS for a voice user. 
However, data users can be associated with either LTE or WiFi. Offloading of data users from one RAT to another is considered at the time of association of 
voice users or departure of existing voice/data users. From a network operator's perspective, total system throughput is an important system 
metric since the generated revenue may largely depend on the number of bytes transported by the operator. 
Moreover, data users experiencing high throughput are more likely to adhere to a network operator, thus facilitating the improvement of the customer base of 
the operator. Therefore,
we aim to maximize the total system throughput and formulate this as a continuous-time Markov Decision Process (MDP) problem.\par
In the case of data users, although in low load condition, WiFi usually provides higher throughput than that of LTE, as the WiFi load increases, average per-user throughput in 
WiFi decreases rapidly \cite{g}. Therefore, under high WiFi load, LTE may offer more throughput than WiFi to data users and thus may be preferable to data users for the association.
However, voice and data users are allocated resources in LTE from a common resource block pool. The throughput requirement of LTE data users is usually more than 
that of the voice users. Therefore, maximization of the total system throughput may result in excessive blocking of voice users. The system may attempt to
save LTE resources which can be allocated later to data users having greater contributions to the system throughput than that of voice users. 
It results in an inherent trade-off between the total system throughput and the blocking probability of voice users. We consider this problem within the 
formalism of Constrained Markov Decision Process (CMDP), which maximizes the total system throughput subject to a constraint on the voice user blocking probability.\par
It is proved that the associated optimal policies contain a threshold structure, where after a certain threshold on the number of WiFi data users, data users are served using LTE. 
The existence of a similar threshold for the blocking of voice users is also established.
Based on the threshold based optimal policy,
we propose two RAT selection algorithms for LTE-WiFi HetNet. Extensive simulations are performed in ns-$3$ (a discrete event
network simulator) \cite{ns3} to evaluate the performance
of the proposed association algorithms. Using simulation results, performance gains of the proposed algorithms in comparison to another algorithm in the 
literature \cite{m} are also evaluated.
\subsection{Related Work}\label{sec:rw}
The solutions which investigate RAT selection problem in a HetNet, 
can be broadly divided into two categories, viz., user-initiated \cite{mikhail}-\cite{game} and network-initiated 
\cite{a}-\cite{q}.
In \cite{mikhail}, a user-initiated RAT selection algorithm based on Signal-to-Noise Ratio (SNR) 
and load information of individual
RATs with the adaptation of hysteresis mechanism, is considered for LTE-WiFi HetNet. The performance of this scheme is compared with
network-initiated cell-range extension schemes that use network-optimized Received Signal Strength Indicator (RSSI) bias value to steer users 
to other RATs. In \cite{farah}, a distributed RAT selection algorithm is proposed 
based on the distance and peak rate obtained from different IEEE 802.11 \cite{ieee} APs. \par
Few heuristic-based network-initiated RAT selection approaches are considered in \cite{song}-\cite{wlandata}. 
While the algorithm proposed in \cite{wlanfirst} prefers WLAN over cellular regardless
of the service type, the 
one proposed in \cite{wlandata} prefers cellular RAT for voice users and WLAN for data users.
Among the other network-initiated RAT selection schemes, \cite{a}-\cite{helou}, \cite{markovian}-\cite{q} consider various optimization 
approaches. In \cite{a}, optimal RAT selection problem is addressed in a HetNet to optimize throughput, blocking probability, etc.. Since the associated algorithm scales 
exponentially with the system size, authors also propose a computationally efficient heuristic policy.  
In \cite{h}, the association resulting in maximum value for the sum of logarithms of throughputs is chosen as
the optimal association among Wireless Stations (STAs) and APs. However, authors do not take into account user
arrival and departure. 
RAT selection policies in wireless networks \cite{altman1}-\cite{turhan1} are sometimes observed to contain certain threshold structures.
A multi-class admission system is considered in \cite{altman1}, where it is demonstrated that if it is optimal to accept a user of a class,
then it is optimal to accept a user of higher profit class too.\par
Offloading of data users from one RAT to another plays 
a major role in the capacity improvement of the system. Performance improvement achieved by on-the-spot offloading \cite{n}, a user-initiated WiFi 
offloading scheme, is analyzed in \cite{m}. The basic idea behind on-the-spot offloading is to steer the mobile data users to WiFi,
whenever WiFi is available. 
The user-initiated offloading 
scheme in \cite{i} is based on the combined information of signal strength and network load of LTE/WLAN. 
However, being a greedy one, this algorithm fails to converge to a globally optimum solution. 
The network-initiated offloading approach in \cite{b} computes the optimal fraction of traffic to be offloaded to WiFi
such that the per-user throughput of the system is maximized and performs better than 
on-the-spot offloading \cite{m}. 
However, the model in \cite{b}
does not incorporate voice users inside an LTE network. 
\subsection{Our Contribution}
In this paper, we investigate the optimal association policy in an LTE-WiFi HetNet. 
We consider a system where voice and data users can arrive or depart at any point in time. We introduce the possibility
of data user offloading from one RAT to another at the time of association or departure of a user.
We target to maximize the total system throughput.
The problem is formulated within the framework of
MDP. Another formulation is developed where we target to maximize the total system throughput, subject to a constraint on the voice user blocking probability, 
using CMDP.
Threshold structures of optimal policies are established.
We propose two algorithms based on the computed optimal policies and implement in ns-$3$.
3GPP recommended parameters are used in the simulations. Since most of the practical offloading schemes offload data users to WiFi,
performances of the proposed algorithms are compared with on-the-spot offloading algorithm \cite{m}.\par
The arrival of a new user in the 
LTE-WiFi system triggers the need for the optimal RAT selection.
Also, with the arrival or departure of users, 
the active users in different RATs may need to get offloaded to other RATs.
While few works in the literature have focused on RAT selection and offloading techniques, 
respectively, no existing literature, to the 
best of our knowledge, has addressed the issue of joint RAT selection and offloading for LTE-WiFi HetNet.\par
The rest of the paper is organized as follows. The system model is described in Section \ref{sec:sm}. In Section \ref{sec:pfsm}, 
the RAT selection problems 
are formulated within the framework of unconstrained and constrained continuous-time MDP, respectively. In Section \ref{sec:sop}, we derive the threshold structure of the optimal 
policy. Algorithms for the association 
of voice and data users in LTE-WiFi HetNet are proposed in Section \ref{sec:pniaa}. Section \ref{sec:nsr} presents simulation results. In Section 
\ref{sec:cfw}, we conclude the paper.

\section{System Model}\label{sec:sm}
We consider a system where an LTE BS and a WiFi AP are present. As illustrated in Fig.\ref{fig:hetnet},
we assume that 
both the BS and the AP are connected to a centralized 
controller by lossless links. 
We assume that the voice and data users are geographically located at any point
in the LTE BS coverage area. Since data users outside the dual coverage area of the LTE BS and the WiFi AP always get associated with the LTE BS and no decision
is involved in this case, without loss of generality, we take into consideration only those data users which are present inside the WiFi AP coverage area. 
We assume that there is a common resource pool in LTE for the voice users as well as the data users inside the WiFi AP coverage area. 
Data users inside the dual coverage area can be associated with the LTE BS or the WiFi AP. All the users are assumed to be stationary.
Voice and data user arrivals follow Poisson processes
with means $\lambda_v$ and $\lambda_d$, respectively. Service times for voice and data user are exponentially distributed with means $\frac{1}{\mu_v}$
and $\frac{1}{\mu_d}$, respectively. 
For justification behind these assumptions, see \cite{internet}.
\begin{remark}
Although for brevity of notation, a single LTE BS and a single AP have been considered, the system model can be generalized to a single LTE BS and 
multiple APs with non-overlapping coverage areas. 
Moreover, considering that each point in a geographical area is mapped to a single LTE BS (the LTE BS with highest average signal strength, say),
multiple BSs can also be included in the system model with slight modifications.
\end{remark}
\subsection{State Space}
We model the system as a controlled continuous time stochastic process $\{ X(t)\}_{t\ge 0}$ defined on a state space $\mathcal{S}$.
Any state $s\in \mathcal{S}$ is represented as a $3$-tuple 
${s}=(i,j,k),$ where $i,j$ and $k$ represent the number of voice users in LTE, the number of data users in LTE and the number of data users in WiFi, respectively.
The system state remains unchanged unless an existing user departs or a new user arrives in the system. The arrivals and departures in the system
are referred to as events. Five types of events are possible, viz., ($E_1$) an arrival of a new voice user in the system, 
($E_2$) an arrival of a new data user in the system, ($E_3$) a departure of an existing voice user from LTE, ($E_4$) a departure of an existing data user from LTE 
and ($E_5$) a departure of an 
existing data user from WiFi. Whenever an event occurs, the centralized controller takes an action, and based on the type of event and the action 
taken by the controller, a state transition may happen. Note that the transitions of $\{ X(t)\}_{t\ge 0}$ happen only at event epochs and not otherwise.
Thus, it suffices to observe the system state only at event epochs.
A finite amount of reward and cost are associated with every state-action pair.
Detailed descriptions of the action space, state transitions, reward and cost are provided in subsequent subsections.\par
Next, we elaborate on the structure of $\mathcal{S}$. We assume that $(i,j,k)\in \mathcal{S}$ if 
 $ (i+j) \le C$ and  $k \le W,$ 
where $C$ is
the total number of common resource blocks reserved in LTE for voice and data users, and $W$ is
the maximum number of users in WiFi, so that the per-user throughput in WiFi is greater than a threshold. 
The condition $(i+j) \le C$ arises because we assume that in each LTE subframe, every admitted user is allocated one resource block.
If this allocation is not possible, a new user is not admitted in the LTE system. Furthermore, note that WiFi throughput
decays monotonically \cite{g} as the number of WiFi users increases. We assume that each user gets more than a threshold value of average 
throughput (say $2$ Mbps), which leads to the bound $W$ on the maximum number of users that can be accommodated in the WiFi system. 
\begin{remark}Although the allocation of
multiple resource blocks is closer to the practical scenario, this complicates the system model
while the methodology and approach adopted in this paper do not change. 
\end{remark}
\subsection{Action Space}
The set of actions defines a set of possible association and offloading strategies in the event of arrival or departure of a user. 
Let the action space be denoted by $\mathcal{A}$.
Depending on the arrival or departure, we have a set of actions as stated below.
\[
    \mathcal{A}= 
\begin{cases}
    A_1,& \parbox[t]{.6\textwidth}{Block the arriving user or do nothing \\ during departure,} \\
    A_2,& \text{Accept voice/data user in LTE,} \\
    A_3,& \text{Accept data user in WiFi,} \\
    A_4,& \parbox[t]{.6\textwidth}{Accept voice user in LTE and offload \\one data user
    to WiFi,}\\
    A_5,& \parbox[t]{.6\textwidth}{Move one data user to a RAT
    (from \\ which departure has occurred).}\\
\end{cases}
\]
\begin{remark}
In this paper, actions are chosen based on the system state and the event occurred. One way of representing this is embedding the event in the state
space so that the action depends only on the system state. However, to avoid notational complications associated with this approach, we view the action 
as a function of the system state and the event.
\end{remark}
Let the set of states (subset of $\mathcal{S}$) in which action $a$ chosen based on an event $E_l$ is feasible
be denoted by ${\mathcal{S}}_{E_l,a}$.
Thus, in the case of voice user arrival, we have, 
\[
    {\mathcal{S}}_{E_1,a}= 
\begin{cases}
    \mathcal{S}\setminus \{(0,0,0)\},& a=A_1,\\
    \mathcal{S}\setminus \{(i,j,k):(i+j)=C\},& a=A_2, \\
    \mathcal{S}\setminus \{(i,j,k):(j=0) || (k=W)\},& a=A_4, \\
    \{\emptyset\},& \text{else}.
\end{cases}
\]

For data user arrival,
\[
    {\mathcal{S}}_{E_2,a}= 
\begin{cases}
     \{(i,j,W):(i+j)=C\},& a=A_1,\\
    \mathcal{S}\setminus \{(i,j,k):(i+j)=C\},& a=A_2, \\
    \mathcal{S}\setminus \{(i,j,k):k=W\},& a=A_3, \\
    \{\emptyset\},& \text{else}.
\end{cases}
\]

For voice user departure from LTE,
\[
    {\mathcal{S}}_{E_3,a}= 
\begin{cases}
    \mathcal{S}\setminus \{(i,j,k):i=0\},& a=A_1,\\
    \mathcal{S}\setminus \{(i,j,k):(i=0)||(k=0)\},& a=A_5, \\
    \{\emptyset\},& \text{else}.
\end{cases}
\]
For data user departure from LTE,
\[
    {\mathcal{S}}_{E_4,a}= 
\begin{cases}
    \mathcal{S}\setminus \{(i,j,k):j=0\},& a=A_1,\\
    \mathcal{S}\setminus \{(i,j,k):(j=0)||(k=0)\},& a=A_5, \\
    \{\emptyset\},& \text{else}.
\end{cases}
\]
For data user departure from WiFi,
\[
    {\mathcal{S}}_{E_5,a}= 
\begin{cases}
    \mathcal{S}\setminus \{(i,j,k):k=0\},& a=A_1,\\
    \mathcal{S}\setminus \{(i,j,k):(j=0)||(k=0)\},& a=A_5, \\
    \{\emptyset\},& \text{else}.
\end{cases}
\]
In the case of voice and
data user arrivals, the set of all possible actions are $\{A_1,A_2,A_4\}$ and $\{A_1,A_2,A_3\}$, respectively.
However, when an event $E_l$ occurs, action $a$ is not feasible if the system state is not present in ${\mathcal{S}}_{E_l,a}$.
In this paper, voice user blocking ($A_1$) 
is considered to be a feasible action in all the states, provided
the system is not empty. 
We consider blocking as a feasible action for data users, only when capacity is reached for both the RATs.
When a user departs from LTE or WiFi, the controller can choose either $A_1$ or $A_5$. If after the departure of a user from LTE, 
$A_5$ is 
chosen, it offloads one data user from WiFi to LTE. 
\subsection{State Transitions}
Based on an event and an action chosen,
from a state, the system moves deterministically to a different state. 
Assume that from the state $s=(i,j,k)$, the system moves to the state $s'(E_l,a)=(i',j',k')$
under the event $E_l$ and chosen action $a$. 
Values of $i',j'$ and $k'$ for different events $E_l$ (arrivals and departures of users) and action $a$ are tabulated in Table \ref{trpr}.
 \begin{table}[ht]
\caption{Transition Probability Table.}\label{trpr}
\centering 
\begin{tabular}{|l||l|}
\hline
{$\boldsymbol{(E_l,a)}$} & {$\boldsymbol{(i',j',k')}$} \\ \hline
$(\text{Arrival},A_1)$ & $(i,j,k)$ \\ \hline
$(\text{Voice departure from LTE},A_1)$ & $(i-1,j,k)$ \\ \hline
$(\text{Data departure from LTE},A_1)$ & $(i,j-1,k)$ \\ \hline
$(\text{Data departure from WiFi},A_1)$ & $(i,j,k-1)$ \\ \hline
$(\text{Voice arrival},A_2)$ & $(i+1,j,k)$ \\ \hline
$(\text{Data arrival},A_2)$ & $(i,j+1,k)$ \\ \hline
$(\text{Data arrival},A_3)$ & $(i,j,k+1)$ \\ \hline
$(\text{Voice arrival},A_4)$ & $(i+1,j-1,k+1)$ \\ \hline
$(\text{Voice departure from LTE},A_5)$ & $(i-1,j+1,k-1)$ \\ \hline
$(\text{Data departure from LTE},A_5)$ & $(i,j,k-1)$ \\ \hline
$(\text{Data departure from WiFi},A_5)$ & $(i,j-1,k)$ \\ \hline
\end{tabular}
\end{table}
Note that this table is exhaustive in all kinds of events and actions. However, in a state, we need to consider 
only those events and actions which are feasible in that state.
\subsection{Rewards and Costs}
Let the reward and cost functions per unit time corresponding to a state $s$, event $E_l$ and action $a$ be represented by $r(s,E_l,a)$ and $c(s,E_l,a)$, respectively. 
Let $R_{L,V}$ and $R_{L,D}$ denote the bit rate of voice and data users in LTE, respectively. To keep the model simple and computationally tractable, we assume 
that the bit rate of data users (e.g. data services like interactive video conferencing) in LTE is constant.
In general, a
voice user generates constant bit rate (CBR) traffic, and hence we take $R_{L,V}$ to be a constant. 
$R_{W,D}(k)$ corresponds to the per-user data throughput of $k$ users in WiFi. We assume full buffer traffic model \cite{g} for WiFi.
The calculation of $R_{W,D}(k)$ is based on the contention-driven medium access of WiFi users. It is a function of the probabilistic
transmission attempts of the users, corresponding
success and collision probabilities, and slot times for successful transmission, idle slots and busy
slots during collisions.
The reward per unit time in a state under the occurrence of an event and an action chosen is defined as the total system throughput in that state 
under that event and the chosen action.
For example, in the case of data user arrivals and $A_2$, it can be expressed as
\begin{equation*}
 r(s,E_2,A_2)= {iR_{L,V}+(j+1)R_{L,D}+kR_{W,D}(k)}.
\end{equation*}

The cost function considered here is as follows.
Whenever the centralized controller blocks an incoming voice user, one unit cost is incurred per unit time. Otherwise it is zero. 
Thus,
\begin{equation*}
c(s,E_l,a)=\begin{cases}
  1, & \text{if voice user is blocked}  , \\
  0, & \text{else}.\\
  \end{cases}
\end{equation*}
We consider blocking of data users only when both LTE and WiFi systems are full. 
Hence, we do not consider any cost on the blocking of data users.
\section{Problem Formulation And Solution Methodology}\label{sec:pfsm}
A \textit{decision rule} describes the mapping regarding which action is to be chosen at different states 
$s\in S$ and decision epochs $t_n$. An association \textit{policy} is a sequence of decision rules
$({\pi}^{t_1},{\pi}^{t_2},\ldots,{\pi}^{t_n},\ldots)$ taken at different decision epochs.  
Our goal is to determine an association policy which maximizes the total system throughput.
This can be formulated as a continuous-time unconstrained MDP problem.
In this case, a \textit{pure optimal policy} exists \cite{c}. 
Since the contribution of data users to the total system throughput is more than that of the voice users, the optimal association policy may result in 
high blocking probability of voice users. Hence, to address the trade-off between the total system throughput
and the voice user blocking probability, we consider the CMDP problem, 
where we target to maximize the total system throughput, subject to a constraint 
on the voice user blocking probability.
In this case, a stationary randomized optimal policy exists \cite{p}. A \textit{Randomized
policy} is a mixture of two pure policies with associated probabilities. 
Arrival and departure of users can occur at arbitrary points in time, which makes the problem
continuous time in nature.
\subsection{Problem Formulation} 
Let $\mathcal{M}$ be the set of all memoryless policies. To guarantee a unique stationary distribution, we assume that Markov chains associated with 
such policies are irreducible. 
Following the policy $M\in \mathcal{M}$, let the average reward and cost of the system over infinite horizon be denoted by
$V^{M}$ and $B^{M}$, respectively. Let $R(t)$ and $C(t)$ be the total reward and cost of the system incurred up to time $t$, respectively.
For the unconstrained MDP problem, our objective is to maximize the total system throughput which can be described as follows.
\begin{equation}\label{max1}
\text{Maximize:}\quad V^{M}=\lim_ {t\to \infty} {\frac{1}{t}{\mathbb{E}_{M} [R(t)]}},
\end{equation}\\
where $\mathbb{E}_{M}$ denotes the expectation operator under the policy $M$.
However, for the CMDP problem, our objective is to maximize the total system throughput,
subject to a constraint on the blocking probability of voice users. This can be described as follows.
\begin{equation}\label{max2}
\begin{split}
 &\text{Maximize:} \quad V^{M}=\lim_ {t\to \infty} {\frac{1}{t}{\mathbb{E}_{M} [R(t)]}},\\&
 \text{Subject to:} \quad B^{M}=\lim_ {t\to \infty} {\frac{1}{t}{\mathbb{E}_{M} [C(t)]}}\le B_{\max},
\end{split}
\end{equation}
where $B_{\max}$ denotes the constraint on the blocking probability of voice users.  
Our objective is to determine the optimal policy for both unconstrained and constrained MDP problem. 
Since the optimal policies are known to be stationary policies, the corresponding limits in Equation (\ref{max1}) and (\ref{max2}) exist.
\subsection{Conversion to Discrete-Time MDP}
To compute the optimal policy, we can use the well-known Value Iteration Algorithm (VIA) \cite{c}. 
However, before that, the continuous-time MDP has to be transformed into an equivalent discrete-time MDP using uniformization \cite{c}, 
so that both models have the same average expected reward and cost for a stationary policy.\par Let $\tau_s(E_l,a)$ represent the expected time until
the next event, if action $a$ is chosen in state $s$ under the event $E_l$.  We need to choose a number $\delta$, such that
$0<\delta \ll \displaystyle \min_{s,E_l,a} \tau_s(E_l,a)$. The state space and the action space remain the same
in the equivalent discrete-time model. 
Let $\hat{p}(s,E_l)$, $\hat{r}(s,E_l,a)$ and $\hat{c}(s,E_l,a)$ represent the probabilities of the event, reward and cost in the transformed discrete-time model 
in state $s$ under the event $E_l$ and action $a$, respectively. 
Thus, we have,
\begin{equation*}
\hat{r}(s,E_l,a)=r(s,E_l,a)\quad \text{and}\quad \hat{c}(s,E_l,a)=c(s,E_l,a). 
\end{equation*} 
$\hat{p}(s,E_l)$ is a function of rate of different events $E_l$ and $\delta$. 
%
%
Note that, this discrete-time MDP has identical stationary policies to that of the continuous-time MDP.
\subsection{Lagrangian Approach}
After conversion into an equivalent discrete-time MDP model, 
we use the Lagrangian approach \cite{p} to solve the CMDP.
For a fixed value of Lagrange Multiplier (LM) $\beta$, 
the modified reward function of the CMDP is
\begin{equation*}
\hat{r}(s,E_l,a;\beta)=\hat {r}(s,E_l,a)-\beta \hat {c}(s,E_l,a). 
\end{equation*}
The dynamic programming equation below describes the necessary condition for optimality.
\begin{equation}\label{opt1} 
\begin{split}
V(s)=&\sum \limits_{E_l}\hat{p}(s,E_l) \max \limits_a [\hat{r}(s,E_l,a;\beta)+V(s')]+\big(1-\sum \limits_{E_l} \hat{p}(s,E_l)\big)V(s),
\end{split}
\end{equation}
where $V(s)$ denotes the value function in state $s\in \mathcal{S}$. 
Next, our aim is to determine the value of $\beta$ ($=\beta^*$, say) which maximizes the average expected reward, subject to a cost constraint. 
The value of $\beta^*$ can be determined using gradient descent algorithm, as discussed in \cite{e}. In $k$th iteration, we modify the value of $\beta$ from its 
previous iteration as,
\begin{equation}\label{betago}
 \beta_{k+1}=\beta_{k}+\frac{1}{k}(B^{\pi_{\beta_k}} -B_{\max}).
\end{equation}
For a fixed value of $\beta$, the unconstrained maximization problem can be solved using VIA, as described below. 
\begin{equation}\label{opt}
\begin{split}
V_{n+1}(s)=&\sum \limits_{E_l}\hat{p}(s,E_l) \max \limits_a [\hat{r}(s,E_l,a;\beta)+V_n(s')]+\big(1-\sum \limits_{E_l} \hat{p}(s,E_l)\big)V_n(s),
\end{split}
\end{equation}
where $V_n(.)$ is an estimate of the value function after $n$th iteration. 
After determining $\beta^*$, the next step is to determine the optimal policy for the CMDP problem. As discussed
in \cite{p}, optimal policy for a CMDP problem is a mixture of two pure policies $\pi_{\beta^*-\epsilon}$ and $\pi_{\beta^*+\epsilon}$,
obtained by perturbation of $\beta^*$ by a small amount $\epsilon$ in both directions. Let the long-term average 
expected costs of the two pure policies be $B_{\beta^*-\epsilon}$ 
and $B_{\beta^*+\epsilon}$, respectively. In the next step, we determine the value of the parameter $p$ 
such that 
\begin{equation*}
pB_{\beta^*-\epsilon}+(1-p)B_{\beta^*+\epsilon}=B_{\max}. 
\end{equation*} 
Finally, we have a randomized optimal policy for the considered CMDP problem. 
At each decision epoch, policies $\pi_{\beta^*-\epsilon}$ and $\pi_{\beta^*+\epsilon}$ 
are chosen w.p. $p$ and $(1-p)$, respectively. \par
Note that the iterations on LM described above are necessary only for the CMDP problem. In the case of unconstrained MDP, 
VIA can be employed to determine a pure optimal policy,
after an equivalent
discrete-time MDP model is obtained.
\section{Structure of the Optimal Policy}\label{sec:sop}
The dynamic programming equations (Equation (\ref{opt1}) and (\ref{opt})) described in the previous section are exploited to establish the fact that the optimal
policy is of threshold type. The optimality of threshold policy is established with the aid of some lemmas presented below. For the purpose 
of readability, we present the proofs of the lemmas in Appendices.
\subsection{Optimal Policy for Data Users}
In this section, we present structural properties on the optimal policy for the service of data users along with their physical interpretations.
Let us denote the throughput increment in WiFi when the number of WiFi users increases from $k$ to $(k+1)$ by $\tilde {R}_{W,D}(k)$.
Therefore,
$\tilde {R}_{W,D}(k)=(k+1){R}_{W,D}(k+1)-k {R}_{W,D}(k)$. We assume the following.
\begin{assumption}\label{ass:1}
Let $R_{L,D}$ be such that $R_{L,D}\ge \tilde {R}_{W,D}(k)$, $\forall k \ge k_{th}$
and $R_{L,D}< \tilde {R}_{W,D}(k), \forall k < k_{th},$
where $k_{th}$ is a threshold such that if $k\ge k_{th}$, the data rate improvement provided
by a single data user in the LTE system is more than the improvement in total WiFi throughput as the number of WiFi data users 
is increased from $k$ to $(k+1)$.
\end{assumption}
\begin{remark} 
Following the full buffer traffic model \cite{g}, $\tilde {R}_{W,D}(k)$ initially increases with $k$
and then decreases. This behavior matches with Assumption \ref{ass:1}. 
\end{remark}
The following two lemmas describe a threshold structure on the optimal policy for the service of data users.
Specifically, up to a certain threshold on the total number of data users, data users are served using WiFi. After the threshold is crossed, 
data users are served using LTE.
\begin{lemma}\label{lemma1}
For every $i$ and $j$ such that $(i+j)<C$, if the total number of data users in the system is $(j+k)\le k_{th}$, then the optimal policy is to serve all data users
using WiFi. In other words,$(j+k)\le k_{th} \implies j=0$.
\end{lemma}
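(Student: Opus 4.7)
The plan is to first establish the value-function monotonicity
\[
V(i,j,k) \le V(i,j-1,k+1)
\]
whenever $j \ge 1$, $(i+j) \le C$, $k+1 \le W$, and $(j+k) \le k_{th}$, and then to deduce the lemma from this monotonicity together with the Bellman optimality equation (Equation~(\ref{opt1})). The idea is that whenever an event-action pair would produce a ``bad'' successor $(i,j,k)$ with $j \ge 1$ and $(j+k) \le k_{th}$, there is a competing action producing the ``good'' successor $(i,j-1,k+1)$ whose value is at least as large, so no optimal policy enters such a bad state; a straightforward induction on events from the empty (and hence good) initial state then gives the claimed invariant $(j+k) \le k_{th} \Rightarrow j = 0$.

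I would prove the monotonicity by induction on the value-iteration index $n$, showing $V_n(i,j,k) \le V_n(i,j-1,k+1)$ using the recursion~(\ref{opt}). The base case $V_0 \equiv 0$ is immediate. For the inductive step, two observations drive the argument. First, the per-step reward difference in favour of $(i,j-1,k+1)$ equals $\tilde{R}_{W,D}(k) - R_{L,D}$, which is strictly positive for every $k < k_{th}$ by Assumption~\ref{ass:1}; the hypothesis $j \ge 1$, $j+k \le k_{th}$ forces $k \le k_{th}-1$, so this gap is positive under every event. Second, although the uniformized probabilities $\hat p(s, E_4)$ and $\hat p(s, E_5)$ depend on $j$ and $k$ individually and so differ between the two states, the total data-departure rate $(j+k)\mu_d$ agrees across the two states, and the arrival and voice-departure rates trivially agree too. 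I would couple $(j-1)$ LTE-departure events and $k$ WiFi-departure events so that paired successors continue to differ by exactly one LTE-to-WiFi shift and the inductive hypothesis applies; the single unpaired LTE departure at $(i,j,k)$ and the unpaired WiFi departure at $(i,j-1,k+1)$ both send the system to the common state $(i,j-1,k)$ and so contribute equally. For each arrival event I would match the action with its natural mirror on the other state so that successors again obey the same LTE-to-WiFi offset.

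The main obstacle will be the boundary analysis within this coupling, since the feasibility sets ${\mathcal{S}}_{E_l,a}$ behave asymmetrically near capacity limits: for instance, $A_4$ at $(i,j,k)$ needs $k+1 \le W$ whereas at $(i,j-1,k+1)$ it needs $k+2 \le W$, and $A_3$ after a data arrival is likewise tighter at the latter state. Whenever an action is feasible on one state but not on its mirror, I would exhibit a feasible fall-back on the constrained side whose value dominates, using the positive instantaneous-reward gap from Assumption~\ref{ass:1} together with the inductive hypothesis. Passing $n \to \infty$ transfers the monotonicity to $V$, and substituting into~(\ref{opt1}) shows that whenever a decision epoch offers a choice between a bad successor $(i,j,k)$ and the good successor $(i,j-1,k+1)$, the latter is selected; a routine induction on events then completes the proof.
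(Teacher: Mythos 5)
Your high-level strategy---prove the one-shift monotonicity $V(i,j,k)\le V(i,j-1,k+1)$ by induction on the value-iteration index and then read the policy off the Bellman equation---is viable, and it is in fact the technique the paper itself uses for Lemmas \ref{lemma5} and \ref{lemma6}. It is, however, \emph{not} the paper's route for Lemma \ref{lemma1}: there the paper gives a direct sample-path interchange argument. It supposes the optimal policy $\pi^*$ chooses $A_2$ at a data arrival in a state $(i,0,k)$ with $k<k_{th}$, constructs a competing policy $\pi$ that chooses $A_3$ and re-couples one event later by appending an extra WiFi-to-LTE offload to whatever $\pi^*$ does, so both paths meet in the same state $S_4$; the value difference is then the strictly negative throughput gap $R_{L,D}-\tilde{R}_{W,D}(k)<0$ accrued over the intervening interval, contradicting optimality of $\pi^*$ (recurrence makes the loss recur infinitely often). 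The departure cases, which you handle via the unpaired-event coupling, are the paper's statements (b) and (c), proved by the same interchange. The paper's argument is shorter because it never needs a global value-function inequality.

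More importantly, your induction as formulated has a genuine gap: the hypothesis domain $(j+k)\le k_{th}$ is not closed under your coupling. Take the pair $(i,j,k)$ and $(i,j-1,k+1)$ with $j+k=k_{th}$ exactly, and consider a data arrival with your mirrored actions $A_2/A_2$: the successors $(i,j+1,k)$ and $(i,j,k+1)$ do differ by exactly one LTE-to-WiFi shift, as you say, but their data total is $k_{th}+1$, so the inductive hypothesis no longer covers them; and no identical-successor fall-back exists for this branch, since from $(i,j-1,k+1)$ a data arrival can reach only $(i,j,k+1)$ or $(i,j-1,k+2)$, never $(i,j+1,k)$. The repair is to prove the stronger statement $V_n(i,j,k)\le V_n(i,j-1,k+1)$ for all $j\ge 1$ and all $k<k_{th}$, with \emph{no} restriction on $j+k$---precisely the domain on which the gap $\tilde{R}_{W,D}(k)-R_{L,D}$ of Assumption \ref{ass:1} stays positive---and to supplement the ``same action'' mirrors at the boundary $k+1=k_{th}$, where the pairs $A_3/A_3$ and $A_4/A_4$ would land at $k'=k_{th}$, beyond the domain (indeed the shift inequality reverses there, by the logic of Lemma \ref{lemma2}): use instead the mirrors $A_3\mapsto A_2$ and $A_4\mapsto A_2$, which produce the \emph{identical} successor from both states with equal reward, so these branches contribute zero. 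With that strengthening your plan closes, and your final deduction is then fine: it is the strictly positive instantaneous reward gap at the decision epoch itself, not the (weak) value inequality, that rules out $A_2$ at arrivals and $A_5$ at departures into ``bad'' states.
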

\begin{proof} 
See Apppedix \ref{app:a}.
\end{proof}
In Lemma \ref{lemma1}, following Assumption \ref{ass:1}, since for $k< k_{th}$, the data rate improvement is more if an additional data user is served using
WiFi rather than using LTE, it is optimal to serve the data users using WiFi.
\begin{lemma}\label{lemma2}
For every $i$ and $j$ such that $(i+j)<C$, if the total number of data users in the system is $(j+k)> k_{th}$, then the optimal policy is to serve $k_{th}$ data
users using WiFi and all other data users
using LTE. In other words,$(j+k)> k_{th} \implies k=k_{th}$.
\end{lemma}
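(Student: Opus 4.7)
The plan is to prove the lemma by contradiction, in the same spirit as Lemma~\ref{lemma1}, but with two subcases corresponding to the two ways $k$ can deviate from $k_{th}$. Suppose there is an optimal policy that places positive long-run probability on a state $(i,j,k)$ with $(i+j)<C$, $(j+k)>k_{th}$, and $k\neq k_{th}$. I will exhibit a neighboring state, reachable by swapping a single data user between LTE and WiFi, whose value function strictly dominates $V(i,j,k)$, thereby contradicting optimality.

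Case 1 ($k>k_{th}$): Compare state $(i,j,k)$ with $(i,j+1,k-1)$, which is feasible because $(i+j)<C$ so LTE still has a free resource block. For any event $E_l$ and any action $a$ feasible in both states, the instantaneous reward difference is $r(i,j+1,k-1,E_l,a)-r(i,j,k,E_l,a)=R_{L,D}-\tilde R_{W,D}(k-1)$, and since $k-1\ge k_{th}$, Assumption~\ref{ass:1} yields $R_{L,D}\ge \tilde R_{W,D}(k-1)$. Case 2 ($k<k_{th}$): Because $(j+k)>k_{th}$ forces $j\ge 1$ and $k<k_{th}\le W$ means WiFi has room, state $(i,j-1,k+1)$ is feasible. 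The instantaneous reward difference is $\tilde R_{W,D}(k)-R_{L,D}>0$ by Assumption~\ref{ass:1}. In both subcases, the throughput collected over one epoch is weakly (in Case~1) or strictly (in Case~2) larger at the swapped state.

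To lift these one-step reward comparisons to a comparison of the value functions, I will proceed by induction on the value iteration recursion in Equation~(\ref{opt}), using the same coupling idea as in the proof of Lemma~\ref{lemma1}. The induction hypothesis is that $V_n(i,j+1,k-1)\ge V_n(i,j,k)$ for $k>k_{th}$ (and the analogous inequality in Case~2). For the inductive step, I pair each event-action pair optimal at $(i,j,k)$ with the corresponding event-action pair at the swapped state in a way that (i) keeps the successor states related by the same swap and (ii) preserves feasibility. The resulting Bellman update then inherits the inequality because the instantaneous reward gap is non-negative and the successor-state value gap is non-negative by the induction hypothesis. Taking $n\to\infty$ gives $V(i,j+1,k-1)\ge V(i,j,k)$ in Case~1 and $V(i,j-1,k+1)>V(i,j,k)$ in Case~2, both contradicting the optimality of $(i,j,k)$, which forces $k=k_{th}$.

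The main obstacle is setting up the coupling between the two states carefully. In particular, when the optimal action at one state is an offload action ($A_4$ or $A_5$), one has to verify that a corresponding feasible action exists at the swapped state whose successor is either identical to, or again related by a single WiFi$\leftrightarrow$LTE swap with, the successor of the original state; the boundary cases where $j=0$ or $k\in\{0,W\}$ after the swap need separate (straightforward) checks. Once this coupling machinery is in place---reusing most of what is already established for Lemma~\ref{lemma1}---the two subcases close cleanly and deliver $k=k_{th}$.
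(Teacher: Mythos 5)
Your proposal is sound in substance but takes a genuinely different route from the paper. The paper never argues about states directly: it decomposes the lemma into three event-level claims (on a data arrival with $k_{th}$ WiFi users, $A_2$ is optimal; on a voice/data departure from LTE with more than $k_{th}$ data users, $A_1$ is optimal; on a WiFi departure, $A_5$ is optimal) and proves the first by a one-step sample-path exchange: assuming the optimal policy $\pi^*$ accepts the arriving data user in WiFi at $(i,0,k_{th})$, it builds a competing policy $\pi$ that accepts in LTE and, at the very next event, mimics $\pi^*$'s action while additionally offloading the extra LTE data user to WiFi, so the two trajectories merge after a single event; the reward difference is exactly $\tilde{R}_{W,D}(k)-R_{L,D}\le 0$ for $k\ge k_{th}$, contradicting optimality of $\pi^*$. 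Your route instead establishes a swap-dominance property of the value function, $V(i,j+1,k-1)\ge V(i,j,k)$ for $k-1\ge k_{th}$ and the strict mirror inequality for $k<k_{th}$, by induction on the value-iteration recursion with a coupling of event-action pairs. That is closer in style to the paper's proofs of Lemmas \ref{lemma5} and \ref{lemma6} than to its proof of Lemma \ref{lemma2}; it buys you a reusable monotonicity fact, at the cost of heavier bookkeeping. In particular your coupling must handle the mismatch in uniformized event probabilities between the two states ($j$ versus $j+1$ LTE data departures and $k$ versus $k-1$ WiFi departures; the natural fix is to pair the extra LTE departure in the swapped state with a WiFi departure in the original, after which both successors coincide at $(i,j,k-1)$), a complication the paper's merge-after-one-event construction sidesteps entirely.

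Two points need tightening. First, dominance of a neighboring state's value does not by itself contradict optimality---policies, not states, are optimal---so you must convert the swap inequality into an action comparison at the decision epoch where the invariant $k=\min(j+k,k_{th})$ is first violated; this conversion does go through, because every feasible action from an invariant state leads either to an invariant state or to a one-swap neighbor of one, but your write-up asserts the contradiction at the state level and leaves this step implicit. Second, in Case 1 Assumption \ref{ass:1} gives only $R_{L,D}\ge \tilde{R}_{W,D}(k-1)$, so you obtain a weak inequality and hence no contradiction: strictly speaking you can conclude only that serving the marginal data user in LTE is at least as good, i.e., that \emph{some} optimal policy satisfies $k=k_{th}$, not that every one does. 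To be fair, the paper's own proof shares this blemish (it deduces $V^{\pi^*}(S_1)<V^{\pi}(S_1)$ from the same weak inequality), so this is a defect you inherit rather than introduce.
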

\begin{proof} 
See Apppedix \ref{app:b}.
\end{proof}
The physical significance of Lemma \ref{lemma2} is that for $k\ge k_{th}$, the data rate improvement provided
by a single data user in LTE is more than that of the WiFi (following Assumption \ref{ass:1}), and hence
it is optimal to serve up to $k_{th}$ data users using WiFi and serve the additional data users using LTE.\par
Following lemma is a direct consequence of how the system is modeled. 
\begin{lemma}\label{lemma3}
For every $i$ and $j$ such that $(i+j)=C$, the optimal policy is to serve all the incoming data users using WiFi until $k=W$, where an incoming data user is 
blocked.
\end{lemma}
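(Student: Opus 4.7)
The plan is to reduce Lemma~\ref{lemma3} to a feasibility statement, since the claim is really a structural property of the action space of the model rather than something that requires exploiting the Bellman equation.

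First I would return to the definition of $\mathcal{S}_{E_2,a}$ for a data-user arrival. A direct inspection shows three things about any state $(i,j,k)$ with $(i+j)=C$. Action $A_2$ (accept the arriving data user into LTE) requires $(i+j)<C$ and is therefore \emph{infeasible} in every such state. Action $A_3$ (accept into WiFi) requires $k<W$ and is therefore feasible precisely when $k<W$. Action $A_1$ (block) is declared feasible at a data-user arrival only in states of the form $(i,j,W)$ with $(i+j)=C$, i.e.\ only when $k=W$.

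From these three observations the lemma essentially writes itself. In any state with $(i+j)=C$ and $k<W$, the unique feasible action at a data-user arrival is $A_3$, so the optimal policy is forced to accept the arriving data user into WiFi. In a state with $(i+j)=C$ and $k=W$, the unique feasible action is $A_1$, so the arriving data user must be blocked. These two cases together give the statement: data users are served in WiFi up to the point where $k=W$, and then blocked.

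Since the argument is entirely at the level of feasibility, there is no real obstacle and no use of the dynamic programming equation (\ref{opt}) is required; the lemma simply records what remains of the action set once LTE is saturated. The only subtlety worth mentioning in the write-up is why $A_1$ is only admitted when both RATs are at capacity: this is the modelling choice, stated in the paragraph following the definition of $\mathcal{S}_{E_l,a}$, that blocking of a data user is never treated as a feasible option while some capacity is still available. That convention, together with the feasibility sets above, closes the proof.
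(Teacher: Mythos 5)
Your proposal is correct and matches the paper's approach exactly: the paper gives no separate proof of Lemma~\ref{lemma3}, remarking only that it is ``a direct consequence of how the system is modeled,'' which is precisely the feasibility argument you spell out from $\mathcal{S}_{E_2,a}$ (with $(i+j)=C$ forcing $A_3$ when $k<W$ and $A_1$ when $k=W$). Your write-up simply makes that implicit reasoning explicit, including the modelling convention that data-user blocking is feasible only when both RATs are at capacity.
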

\subsection{Optimal Policy for Voice Users}
In this section, we characterize the optimal policy for the arrival of voice users. We prove that the optimal policy is of threshold type.
Let $D_i$ be the difference operator which is defined as $D_iV(i,j,k)=V(i+1,j,k)-V(i,j,k)$. Similarly, we define the second difference operator
as $D_{ii}(.)=D_i(D_i(.))$. 
Let $E_i$ be another difference operator defined as $E_iV(i,j,k)=V(i+1,j-1,k+1)-V(i,j,k)$. We define the second difference operator
as $E_{ii}(.)=E_i(E_i(.))$. Similarly, we define $F_iV(i,j,k)=V(i+1,j-1,k)-V(i,j,k)$.
In this section, the terminologies ``increasing'' and ``decreasing'' are used in the weak sense of ``non-decreasing'' and ``non-increasing'',
respectively.
In each state, let the sum of arrival and service rates be denoted
by $v(i,j,k)$. 
Thus, we have,
\begin{equation*}
v(i,j,k)=\lambda_v+\lambda_d+i\mu_v+j\mu_d+k\mu_d.                                                                       
\end{equation*}
Let us define $f(i,j,k)=(iR_{L,V}+jR_{L,D}+kR_{W,D}(k))$. 
The lemma presented below describes the superiority of one action over the other for the service of incoming voice users. Specifically, up
to a certain threshold on the total number of data users, $A_4$ (accept voice user in LTE with data user offload to WiFi) is better than $A_2$ 
(accept voice user in LTE). After the threshold is crossed,
$A_2$ becomes better.
\begin{lemma}\label{lemma4}
In the case of a voice user arrival in a state $(i,j,k)$, where
$(i+j)<C$,
\begin{itemize}
\item [(i)] $A_4$ is always better than $A_2$ if $k< k_{th}$, 
\item [(ii)] $A_2$ is always better than $A_4$ if $k\ge k_{th}$.
\end{itemize}
\end{lemma}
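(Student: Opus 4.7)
The plan is to reduce the comparison of the two actions to the dynamic programming operator. For a voice arrival, neither $A_2$ nor $A_4$ incurs a blocking cost, so the cost term in the Bellman equation vanishes, and (\ref{opt}) gives
\begin{equation*}
Q(E_1,A_4) - Q(E_1,A_2) = \bigl[\tilde R_{W,D}(k) - R_{L,D}\bigr] + \bigl[V(i+1,j-1,k+1) - V(i+1,j,k)\bigr],
\end{equation*}
because the immediate rewards equal the post-action throughputs $f(i+1,j-1,k+1)$ and $f(i+1,j,k)$, whose difference is exactly $\tilde R_{W,D}(k) - R_{L,D}$. By Assumption \ref{ass:1} the first bracket is positive when $k < k_{th}$ and non-positive when $k \geq k_{th}$, so the remaining task is to show that the value-function difference does not overturn this sign.

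The key structural observation is that the two post-action states $(i+1,j,k)$ and $(i+1,j-1,k+1)$ have identical voice counts and identical total data counts $j+k$; they differ only in how those data users are split between LTE and WiFi. By Lemmas \ref{lemma1}--\ref{lemma3}, the optimal policy drives the data distribution at every subsequent event toward the $k_{th}$-based configuration. When $k < k_{th}$, $A_4$ moves the system one step closer to that optimal configuration (one additional WiFi user), which heuristically gives $V(i+1,j-1,k+1) \geq V(i+1,j,k)$; when $k \geq k_{th}$, $A_4$ moves it one step farther, and the inequality reverses. Either way, the value-function difference reinforces rather than cancels the first bracket.

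To make this rigorous I would induct on the VIA iterate $V_n$, maintaining the claim that
\begin{equation*}
\tilde R_{W,D}(k) - R_{L,D} + V_n(i+1,j-1,k+1) - V_n(i+1,j,k)
\end{equation*}
has the sign of $(k_{th} - 1 - k)$ on the feasible set. The base case $V_0 \equiv 0$ is exactly Assumption \ref{ass:1}. For the inductive step I would expand both value functions via (\ref{opt}), split on the next event $E_l \in \{E_1,\dots,E_5\}$, and use Lemmas \ref{lemma1}--\ref{lemma3} to identify the optimal action at each post-event state; the terms then recombine into a difference of the same shape evaluated at a neighboring state, which the inductive hypothesis handles. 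Passing to $n \to \infty$ via standard VIA convergence then yields the lemma.

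The principal obstacle is the inductive step on boundary regions where feasibility constraints force the two coupled chains to take different optimal actions --- for example when $k+1 = W$, $j-1 = 0$, or $i+j = C$ --- so the symmetry exploited in the interior breaks. Handling these cases cleanly will likely require propagating an auxiliary monotonicity property of $V_n$ (monotone decrease in $j$ past the threshold, or a concavity-type property mirroring the decay of $\tilde R_{W,D}(k)$) alongside the main induction, so that the immediate reward differentials on the boundary can be controlled without appealing to the threshold structure that fails there.
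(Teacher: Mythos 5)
Your reduction to the Q-value identity is correct: the immediate-reward bracket is exactly $\tilde R_{W,D}(k)-R_{L,D}$, and Assumption \ref{ass:1} gives its sign. But the second half --- showing that $V(i+1,j-1,k+1)-V(i+1,j,k)$ does not overturn that sign --- is the entire content of the lemma, and your proposal leaves it as a program rather than a proof. The inductive step is never executed, and you yourself concede that the boundary regimes ($k+1=W$, $j-1=0$, $i+j=C$) need auxiliary monotonicity or concavity properties that you neither state precisely nor establish. There is also an interior obstruction you do not flag: after one expansion of the VIA recursion, the recombined differences are of the ``same shape'' but evaluated at neighboring WiFi occupancies $k-1$, $k$, $k+1$ (e.g.\ a voice departure with $A_5$ maps the pair $(i+1,j,k)$, $(i+1,j-1,k+1)$ to $(i,j+1,k-1)$, $(i,j,k)$), and since the sign of $\tilde R_{W,D}(\cdot)-R_{L,D}$ flips at $k_{th}$, your inductive hypothesis does not uniformly reproduce itself near the threshold interface --- precisely there you would need the very convexity-type property you defer. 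A further wrinkle: Lemmas \ref{lemma1}--\ref{lemma3} characterize the optimal stationary policy, so invoking them to pin down the maximizing action at each finite iterate $V_n$ requires either proving the structure holds iterate-by-iterate or first restricting the recursion to the structured policy class (as the paper does when it writes Equation (\ref{opt3}) in Appendix C); as written, your step conflates the two.

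The paper's own proof, stated as ``similar to the proof of Lemma \ref{lemma1},'' sidesteps all of this with a sample-path interchange. Suppose the optimal policy $\pi^*$ chooses $A_2$ upon a voice arrival with $k<k_{th}$; construct $\pi$ that chooses $A_4$ instead, and at the very next event mimics $\pi^*$'s action while additionally moving one data user from WiFi back to LTE, so the two trajectories coalesce in the same state and coincide forever after. The value difference then telescopes to the reward-rate gap $\tilde R_{W,D}(k)-R_{L,D}$ accrued over a single inter-event interval, signed by Assumption \ref{ass:1}, contradicting optimality of $\pi^*$ (and symmetrically for $k\ge k_{th}$). No value-function estimates, no induction, and no boundary analysis beyond feasibility of the one merging offload are needed. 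If you wish to salvage your route, you would in effect be re-deriving the concavity/submodularity machinery the paper only deploys later for Lemmas \ref{lemma5} and \ref{lemma6}; the coupling argument is both what the paper intends here and substantially shorter.
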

\begin{proof} 
Proof is similar to the proof of Lemma \ref{lemma1}.
\end{proof}
Similar to Lemma \ref{lemma1} and \ref{lemma2}, following Assumption \ref{ass:1}, since for $k< k_{th}$, the data rate improvement is more if an additional data user 
is served using WiFi rather than using LTE, $A_4$ is better than $A_2$.
Therefore, when $k< k_{th}$, the choice of optimal action is between $A_4$ (accept voice user in LTE with data user offload to WiFi) and $A_1$ (blocking). 
Similarly, for $k\ge k_{th}$, the optimal action is either $A_2$ (accept voice user in LTE)
or $A_1$.\par
The following lemma describes that when capacity is not reached in LTE and a voice user arrives, a threshold structure is observed. Until a threshold on the number of 
voice/data users in LTE,
$A_2$ (for $k\ge k_{th}$) or $A_4$ (for $k< k_{th}$) is preferred. After the threshold $A_1$ becomes optimal.
\begin{lemma}\label{lemma5}
For every $i$ and $j$ such that $(i+j)<C$ and a voice user arrival,
\begin{itemize}
\item [(i)] if the optimal action in state $(i,j,k)$ is $A_1$, then the optimal action in state $(i+1,j,k)$ and in state $(i,j+1,k)$ 
is also $A_1$,
\item [(ii)] 
if the optimal action in state $(i,j,k)$ is $A_2$ ($A_4$), then the optimal action in state $(i-1,j,k)$ and in state $(i,j-1,k)$ 
is also $A_2$ ($A_4$).
\end{itemize}
\end{lemma}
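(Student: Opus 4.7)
The plan is to prove Lemma \ref{lemma5} by reducing it to monotonicity of two specific first differences of the value function, and then establishing those properties inductively through value iteration. By Lemma \ref{lemma4}, when a voice user arrives and $(i+j)<C$, the choice of optimal action is between $A_1$ and $A_2$ when $k\ge k_{th}$, and between $A_1$ and $A_4$ when $k<k_{th}$. A direct comparison of one-step Lagrangian rewards from (\ref{opt1}) shows that $A_2$ is preferred to $A_1$ at $(i,j,k)$ iff
\begin{equation*}
D_iV(i,j,k)+R_{L,V}+\beta\ge 0,
\end{equation*}
and $A_4$ is preferred to $A_1$ iff
\begin{equation*}
E_iV(i,j,k)+R_{L,V}-R_{L,D}+\tilde R_{W,D}(k)+\beta\ge 0.
\end{equation*}
Thus both parts (i) and (ii) reduce to the single structural claim that $D_iV(i,j,k)$ and $E_iV(i,j,k)$ are each decreasing in $i$ and in $j$.

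Next, I would prove this claim by induction along the VIA recursion (\ref{opt}) starting from $V_0\equiv 0$. The four required structural inequalities are $D_{ii}V\le 0$, $E_{ii}V\le 0$, $D_iV$ decreasing in $j$, and $E_iV$ decreasing in $j$; all hold trivially for $V_0$. Assuming $V_n$ satisfies them, the task is to show the same for $V_{n+1}$. Since the event probabilities $\hat p(s,E_l)$ in the uniformized chain do not depend on $s$, the recursion decouples across events, and for each event $E_l$ one needs to show that $\max_a[\hat r(s,E_l,a;\beta)+V_n(s')]$ preserves the four inequalities. Arrival and departure rewards are either independent of $(i,j)$ or depend on $k$ only through $\tilde R_{W,D}(k)$, so once the shifted-state value functions inherit the induction hypothesis, the desired inequalities follow algebraically.

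The main obstacle is handling the $\max_a$ operator, since the maximizing action can differ between the states being compared in a second difference, and so the max terms cannot be differenced directly. I would use the standard suboptimal-action substitution: bound the max at one state from below by evaluating the bracketed expression at the action that is optimal at the other state, so that every comparison reduces to a linear combination of differences of $V_n$ at appropriately shifted arguments, to which the inductive inequalities apply. A subtlety is introduced by Assumption \ref{ass:1}: the relevant acceptance action switches between $A_4$ and $A_2$ at $k=k_{th}$, so at that boundary the inductive step requires a separate verification invoking Lemmas \ref{lemma1} and \ref{lemma2} to match the values of the max expression under the two candidate actions. Boundary states where a capacity constraint binds or a coordinate hits zero must be handled separately, but there the offending action is infeasible and the corresponding inequalities degenerate.

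Finally, once the four structural properties pass to the VIA limit $V$ via the standard convergence argument cited in \cite{c}, both parts of the lemma follow immediately: if the acceptance inequality fails at $(i,j,k)$ (so $A_1$ is optimal), monotonicity of $D_iV$ and $E_iV$ in $i$ and $j$ implies that it continues to fail at $(i+1,j,k)$ and $(i,j+1,k)$; conversely, if it holds at $(i,j,k)$ (so $A_2$ or $A_4$ is optimal), it continues to hold at $(i-1,j,k)$ and $(i,j-1,k)$, provided those states remain in $\mathcal{S}$ and the corresponding action remains feasible.
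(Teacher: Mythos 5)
Your proposal follows essentially the same route as the paper's Appendix~C: there, part~(i) for $k\ge k_{th}$ is reduced (via the acceptance threshold $x(i,j,k)=-\beta-R_{L,V}$, matching your first displayed condition) to concavity of $V$ in $i$ ($D_{ii}V\le 0$) and submodularity in $(i,j)$ ($D_{ij}V\le 0$), both proved component-wise by induction along VIA from $V_0\equiv 0$ using exactly your suboptimal-action substitution with the same four-case analysis, the $k<k_{th}$ case using the $E_i$ analogues, and part~(ii) following as the contrapositive. One mis-statement worth correcting: the uniformized event probabilities $\hat p(s,E_l)$ are \emph{not} state-independent --- the departure events carry rates $i\mu_v$, $j\mu_d$, $k\mu_d$ --- so your justification for the decoupling is wrong as written; the event-wise decomposition still goes through, but the departure components have state-dependent coefficients that must be handled jointly with the $\big(1-v(i,j,k)\delta\big)V(i,j,k)$ remainder term, which is precisely the part the paper itself verifies only summarily (``other components also can be proved to be concave'').
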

\begin{proof} See Apppedix \ref{app:c}.
\end{proof}
When the number of voice/data users in LTE is less, $A_2$ or $A_4$ is chosen as the optimal action in the event of a voice user arrival.
When $i$ or $j$ crosses a certain threshold, the number of free resources for incoming voice users decreases. Therefore, the blocking probability of voice users
increases. Thus, after a threshold on $i$ or $j$, $A_1$ becomes optimal.\par
However, when $(i+j)=C$, since $A_2$ is infeasible, optimal action is either $A_1$ or $A_4$.
The lemma presented below discusses the threshold nature of the optimal policy for voice user arrivals when $(i+j)=C$.
\begin{lemma}\label{lemma6}
For every $i$ and $j$ such that $(i+j)=C$ and a voice user arrival,
\begin{itemize}
\item [(i)]  if the optimal action in state $(i,j,k)$ is $A_1$, then the optimal action in 
state $(i+1,j-1,k)$ is also $A_1$,
\item [(ii)] if the optimal action in state $(i,j,k)$ is $A_4$, then the optimal action in
state $(i-1,j+1,k)$ is also $A_4$.
\end{itemize}
\end{lemma}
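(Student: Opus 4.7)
The plan mirrors the proof strategy for Lemma \ref{lemma5}, specialized to the capacity boundary $i+j = C$ where $A_2$ is infeasible, so that the voice-arrival decision collapses to $A_1$ (block) versus $A_4$ (admit with a data-user offload to WiFi). Working with the uniformized discrete-time model, I would define the action-gap function
\begin{equation*}
Q_n(i,j,k) \;=\; \bigl[f(i,j,k) - \beta + V_n(i,j,k)\bigr] \;-\; \bigl[f(i{+}1,j{-}1,k{+}1) + V_n(i{+}1,j{-}1,k{+}1)\bigr],
\end{equation*}
where $V_n$ is the $n$-th value-iteration iterate. Then $A_1$ is optimal at $(i,j,k)$ iff $Q_n(i,j,k) \ge 0$ and $A_4$ iff $Q_n(i,j,k) \le 0$, so both parts of the lemma reduce to a single monotonicity claim: $Q_n$ is non-decreasing along the swap $(i,j) \mapsto (i{+}1,j{-}1)$, i.e.\ $Q_n(i{+}1,j{-}1,k) \ge Q_n(i,j,k)$. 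Since the immediate-reward differences cancel telescopically ($f(i{+}1,j{-}1,k) - f(i,j,k) = f(i{+}2,j{-}2,k{+}1) - f(i{+}1,j{-}1,k{+}1) = R_{L,V} - R_{L,D}$), this reduces further to the value-function inequality
\begin{equation*}
V_n(i{+}1,j{-}1,k) - V_n(i,j,k) \;\ge\; V_n(i{+}2,j{-}2,k{+}1) - V_n(i{+}1,j{-}1,k{+}1).
\end{equation*}

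I would prove the displayed inequality by induction on $n$. The base case $V_0 \equiv 0$ is immediate. For the inductive step, expand both sides via the Bellman recursion in Equation (\ref{opt}) and treat each event $E_1, \ldots, E_5$ separately across the four reference states $(i,j,k)$, $(i{+}1,j{-}1,k)$, $(i{+}1,j{-}1,k{+}1)$, $(i{+}2,j{-}2,k{+}1)$. When the same optimal action is selected in all four maximands, the induction hypothesis applied to the post-event states closes the case. When optimal actions differ, I would invoke the standard sub-optimality trick: bound the `wrong' maximand from below by the value of a suboptimal but admissible action of the same type selected at the other side, so that the residual terms realign under the hypothesis. Letting $n \to \infty$ transfers the inequality to $V$ and hence to $Q$, establishing both parts (i) and (ii) simultaneously.

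The main obstacle will be the inductive step in the branches where the four reference states do not share feasibility patterns for a given event. For instance, an $E_1$ arrival forces $A_1$ at $(i{+}1,j{-}1,k{+}1)$ whenever $k{+}1 = W$ or $j{-}1 = 0$, while $A_4$ may still be feasible at $(i,j,k)$; similarly a departure event can admit the reverse-offload action $A_5$ at three of the four states but not the fourth, due to the boundaries $j=0$ or $k=0$. In those branches I would use the data-user threshold results (Lemmas \ref{lemma1}--\ref{lemma2}) together with Lemma \ref{lemma4} and Assumption \ref{ass:1} to pin down the genuinely optimal action on each side, and apply the sub-optimality trick to bound the mismatched terms in the correct direction. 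Careful bookkeeping of the boundary cases---the caps $C$ and $W$ together with the threshold $k_{th}$---is the only genuinely delicate part of the argument.
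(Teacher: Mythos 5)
Your plan is, in all essentials, the paper's own proof (Appendix D): restrict to the boundary $i+j=C$ where the voice-arrival choice collapses to $A_1$ versus $A_4$, observe that the immediate-reward terms telescope under the swap $(i,j)\mapsto(i+1,j-1)$, and reduce both parts of the lemma to the four-point inequality $V_n(i+2,j-2,k+1)+V_n(i,j,k)\le V_n(i+1,j-1,k+1)+V_n(i+1,j-1,k)$ — in the paper's operator notation, $E_iF_iV_n\le 0$ — proved by value-iteration induction, component-wise over the events, with the suboptimal-action trick when the maximizers at the two outer states differ. The paper likewise splits $k\ge k_{th}$ from $k<k_{th}$, uses Lemmas \ref{lemma1}, \ref{lemma2} and \ref{lemma4} to fix the actions at the non-arrival components, and deduces part (ii) directly from part (i).

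There is, however, one concrete gap in the plan as stated: your induction hypothesis — the single displayed inequality — is not by itself closed under the Bellman recursion. In the mismatched case where the maximizer at $(i+2,j-2,k+1)$ is $A_4$ and at $(i,j,k)$ is $A_1$ (the paper's Case 2 for the voice-arrival component), the residual after the suboptimality bound is $E_iV_{n-1}(i,j,k)-E_iV_{n-1}(i+2,j-2,k+1)=-E_{ii}V_{n-1}(i,j,k)-E_iF_iV_{n-1}(i+1,j-1,k+1)$, where $E_iV(i,j,k)=V(i+1,j-1,k+1)-V(i,j,k)$. The term $E_{ii}V_{n-1}(i,j,k)$ involves the state $(i+2,j-2,k+2)$, i.e., concavity of $V_{n-1}$ along the $A_4$ diagonal, which your hypothesis $E_iF_iV_{n-1}\le 0$ does not imply — the two second differences sit on different $k$-levels, so no amount of "realignment" recovers one from the other. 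Only the opposite mismatch ($A_1$ at the upper state, $A_4$ at the lower) telescopes exactly under the swap trick, as in the paper's Case 4. The fix is standard and is what the paper does implicitly: strengthen the induction to propagate $E_{ii}V_n\le 0$ jointly with $E_iF_iV_n\le 0$ (both hold trivially for $V_0\equiv 0$) and verify both properties through each event component. With that strengthening, the rest of your outline — including the boundary bookkeeping at the caps $C$, $W$ and the threshold $k_{th}$ — goes through as the paper's argument does.
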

\begin{proof} See Apppedix \ref{app:d}.
\end{proof}
The physical interpretation of the above lemma is that for states with $(i+j)=C$, when $i$ is small, $A_4$ is preferred. However, when $i$ crosses a 
threshold, since $j$ becomes small, and consequently the total system 
throughput is small, $A_4$ may further lower the total system throughput. Therefore, blocking of voice users is chosen as the 
optimal action.
\section{Proposed Network-Initiated Association Algorithm}\label{sec:pniaa}
Based on the optimal policy computed by solving the unconstrained MDP and CMDP problem, respectively, in this section, 
we propose two network-initiated association algorithms for LTE-WiFi HetNet. 
The details of the unconstrained MDP-based algorithm is presented below. \par
\begin{algorithm}[t]
\caption{Network-Initiated Unconstrained MDP-based Association Algorithm.}
\label{algo3}
\begin{algorithmic}[1]
\Require $\lambda_v,\lambda_d,\mu_v,\mu_d,R_{L,V},R_{L,D},R_{W,D}(.).$
\State Compute threshold $k_{th}$ for the association of data users.
\Procedure{Calc\textendash Opt\textendash Policy\textendash uc}{}
	\State Calculate optimal policy using VIA (Equation (\ref{opt})).
\EndProcedure
\Ensure Deterministic optimal policy.
\State Compute thresholds $va_c(j,k)$ and $va_{lc}(j,k)$ for the association of voice users for $(i+j)=C$ and $(i+j)<C$, respectively.
\Procedure{Policy\textendash Impl}{}
\For{each arrival of voice users}
	
	 
	 \If {$(i+j)<C$} 
	 \State Choose $A_1$ if $i \ge va_{lc}(j,k)$.
	 \State Choose $A_2$ if {$i< va_{lc}(j,k)$ and $k\ge k_{th}$}.
	 \State Choose $A_4$  otherwise.
\Else
	 \State Choose $A_4$ if $i< va_c(j,k)$, $A_1$ otherwise.
	 \EndIf

\EndFor
\For{each arrival of data users}
	
	\If {$(i+j)<C$}
	 \State Choose $A_3$ if $k< k_{th}$, $A_2$ otherwise. 
         \Else
	 \State Choose $A_3$ if $k<W$, $A_1$ otherwise.
	 \EndIf

\EndFor
\For{each departure of users from LTE (WiFi)}
	
	 \State Choose $A_1$ ($A_5$) if {$k\le k_{th}$}, $A_5$($A_1$) otherwise.
\EndFor
%

\EndProcedure
\end{algorithmic}
\end{algorithm}
The procedure CALC\textendash OPT\textendash POLICY\textendash UC in Algorithm \ref{algo3} computes the optimal policy by solving an unconstrained MDP problem using
solution methodologies described in Section \ref{sec:pfsm}. 
The calculated thresholds for the association of voice/data users are made available to the centralized controller
connected to both the LTE BS and the WiFi AP. 
Since the centralized controller has an overall view of the whole system, 
the information regarding the numbers of active voice and data users in LTE and WiFi, are available to it.
Whenever there is an arrival or a departure, the controller initiates the procedure 
POLICY\textendash IMPL, as described in Algorithm \ref{algo3}. This procedure determines the state of the system based on the number of 
active users in LTE and WiFi networks and then chooses the appropriate action based on the corresponding thresholds. \par
Next, we describe the CMDP-based algorithm which addresses the issue of high blocking probability of voice users, which may be encountered in Algorithm \ref{algo3}.
The algorithm is described in detail below.\par
\begin{algorithm}[t]
\caption{Network-Initiated Constrained Association Algorithm.}
\label{algo4}
\begin{algorithmic}[1]

\Require $\lambda_v,\lambda_d,\mu_v,\mu_d,R_{L,V},R_{L,D},R_{W,D}(.),B_{\max}.$
\State Compute threshold $k_{th}$ for the association of data users.
\Procedure{Calc\textendash Opt\textendash Policy}{}
\State Initialize $\beta$.
\While {{$B^{\pi_{\beta}} \neq B_{\max}$}}
	\State Calculate the optimal policy using VIA.
\State Update $\beta$ using Equation (\ref{betago}).
\EndWhile
\EndProcedure
\Ensure Randomized optimal policy.
\State Compute thresholds $va_c(j,k)$ and $va_{lc}(j,k)$ for association of voice users for $(i+j)=C$ and $(i+j)<C$, respectively.
\Procedure{Policy\textendash Impl}{}
\State As discussed in Algorithm \ref{algo3}.
\EndProcedure
\end{algorithmic}
\end{algorithm}
Apart from the same set of input parameters as required by Algorithm \ref{algo3}, Algorithm \ref{algo4} requires $B_{\max}$ as an additional parameter 
to specify the constraint on the blocking probability of voice users.
The procedure CALC\textendash OPT\textendash POLICY in Algorithm \ref{algo4} computes the randomized optimal policy for the considered CMDP problem. First, the optimal
policy is determined for a fixed value of $\beta$, and then the value of $\beta$ is updated until $B^{\pi_{\beta}}$ becomes equal to $B_{\max}$. All
other procedures are similar to the procedures described in Algorithm \ref{algo3}. 
\section{Numerical and Simulation Results}\label{sec:nsr}
 In this section, the algorithms proposed in the last section are
implemented in ns-$3$ to observe the performance of the proposed algorithms.
Performance of the proposed algorithms in terms of blocking probability of voice users and the total system throughput is compared to the performance of on-the-spot WiFi offloading algorithm \cite{m}.
In this algorithm \cite{m}, data user chooses LTE, only when there is no WiFi coverage. Therefore, in the considered system model, 
with on-the-spot offloading, data users always get associated with WiFi 
until WiFi capacity is exhausted. Voice users always get associated with LTE BS, and when LTE capacity is full, 
they are blocked. 
\subsection{Simulation Model and Evaluation Procedure}
The simulated network model consists of a $3$GPP LTE BS and an IEEE $802.11$g WiFi AP. 
All users are taken to be stationary. The AP is approximately $50$ m away from
the LTE BS, and data users are distributed uniformly within $30$ m radius of the WiFi AP.
The WiFi AP is assumed to be deployed by the same cellular operator and hence trusted from the point of view of interworking. 
LTE and WiFi network parameters used in the simulation, as summarized in Table \ref{tablee1} and \ref{tablee2}, are based on 3GPP models \cite{r}-\cite{s}  
and saturation throughput \cite{g} $802.11$g WiFi model.
Propagation delay in WiFi network is assumed to be negligible.
We consider CBR traffic for voice and data users in LTE. The generation of a fixed rate uplink flow 
is implemented in ns-$3$ using an application developed by us, which works similar to the ON/OFF application.
This application creates sockets between the sender and the receiver, and fixed sized packets are transmitted from the sender to the receiver at a
constant bit rate. 
\begin{table}[ht]
\caption{LTE Network Model.}\label{tablee1}
\centering 
\begin{tabular}{|l||l|}
\hline
\textbf{Parameter} & \textbf{Value} \\ \hline
Maximum voice capacity & $10$ users \\ \hline
Maximum data capacity & $10$ users \\ \hline
Voice bit rate of a single user & $20$ kbps \\ \hline
Data bit rate of a single user & $5$ Mbps \\ \hline
Voice packet payload & $50$ bits \\ \hline
Data packet payload & $600$ bits \\ \hline
Tx power for BS and MS & $46$ dBm and $23$ dBm \\ \hline
Noise figure for BS and MS & $5$ dB and $9$ dB\\ \hline
Antenna height for BS and MS & $32$ m and $1.5$ m\\ \hline
Antenna parameter for BS and MS & Isotropic antenna \\ \hline
Path loss & $128.1+37.6\log(R)$, $R$ in kms\\ \hline
\end{tabular}
\end{table}
\begin{table}[ht]
\caption{WiFi Network Model.}\label{tablee2}
\centering 
\begin{tabular}{|l||l|}
\hline
\textbf{Parameter} & \textbf{Value} \\ \hline
Channel bit rate & $54$ Mbps \\ \hline
UDP header & $224$ bits \\ \hline
Packet payload & $1500$ bytes \\ \hline
Slot duration & $20$ $\mu$s \\ \hline
Short inter-frame space (SIFS) & $10$ $\mu$s \\ \hline
Distributed Coordination Function IFS (DIFS) & $50$ $\mu$s \\ \hline
Minimum acceptable per-user throughput & $3.5$ Mbps \\ \hline
Tx power for AP & $23$ dBm \\ \hline
Noise figure for AP & $4$ dB\\ \hline
Antenna height for AP & $2.5$ m\\ \hline
Antenna parameter & Isotropic antenna \\ \hline\end{tabular}
\end{table}
\subsection{Voice User Arrival Rate Variation}
\subsubsection{Voice User Blocking Probability Performance}
Fig. \ref{fig:bfvoice} illustrates the variation of voice user blocking percentage of on-the-spot offloading \cite{m}, Algorithm \ref{algo3} and \ref{algo4} 
as a function of $\lambda_v$. 
\begin{figure*}[t!]
    \centering
    \begin{subfigure}[t]{0.30\textwidth}
        \centering
        \includegraphics[width=\textwidth]{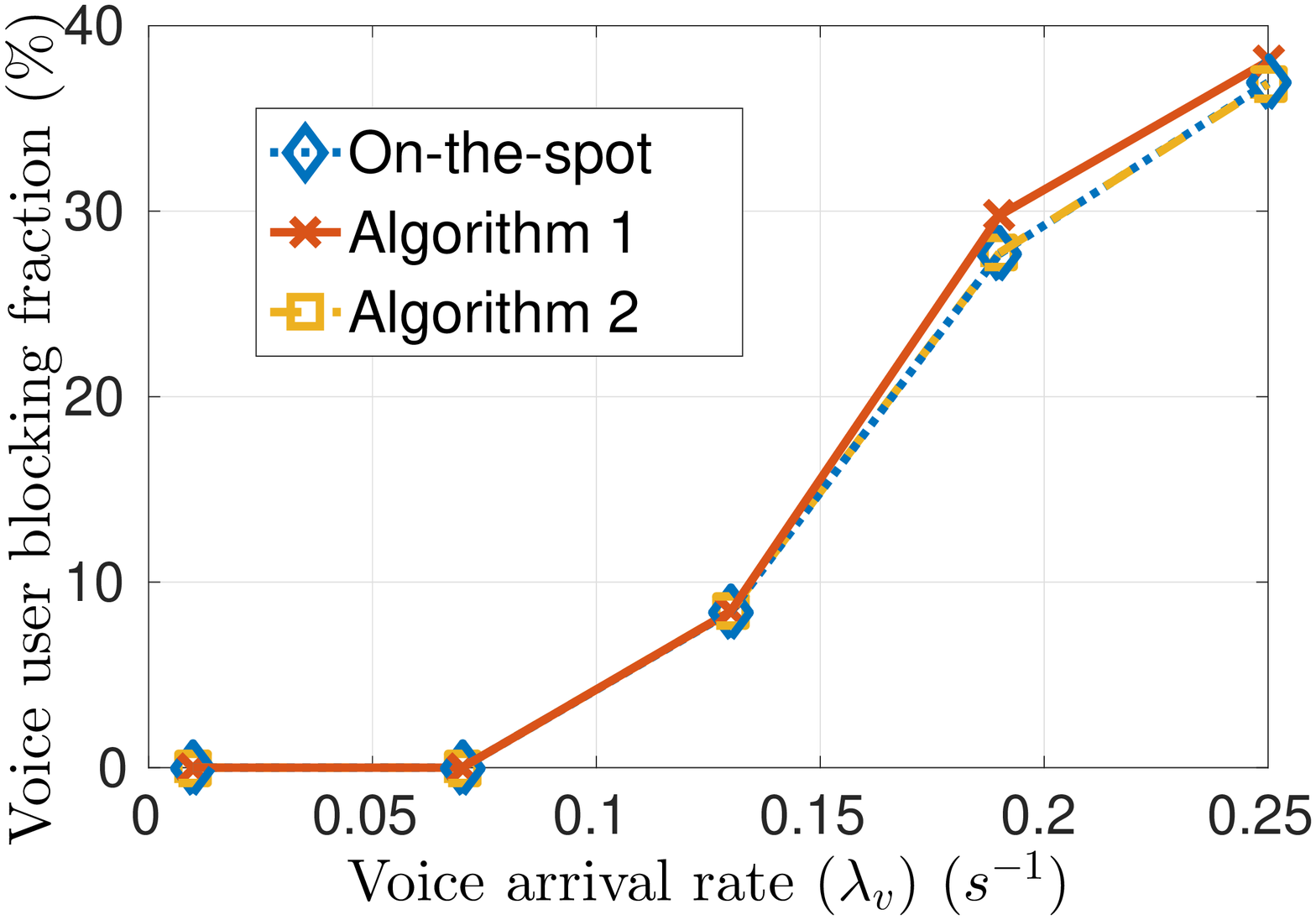}
        \caption{Voice user blocking percentage vs. $\lambda_v$ ($\lambda_d=1/20,\mu_v=1/60$ and $\mu_d=1/10$).}
        \label{fig:bfvoice}
        \end{subfigure}%
    ~ 
    \begin{subfigure}[t]{0.30\textwidth}
        \centering
        \includegraphics[width=\textwidth]{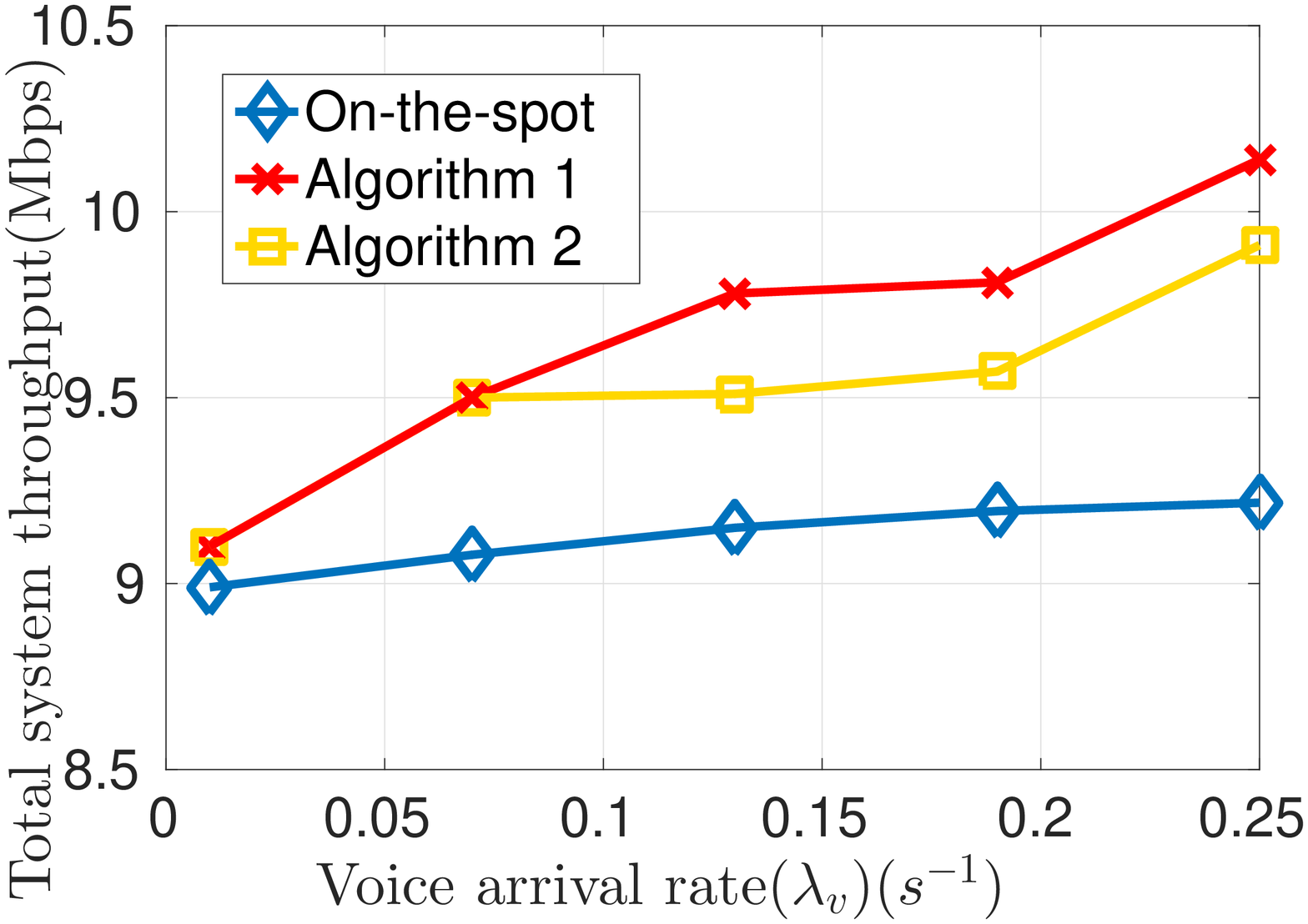}
        \caption{Total system throughput vs. $\lambda_v$ ($\lambda_d=1/20,\mu_v=1/60$ and $\mu_d=1/10$).}
        \label{fig:putvoice}
        \end{subfigure}
        ~ 
    \begin{subfigure}[t]{0.30\textwidth}
        \centering
        \includegraphics[width=\textwidth]{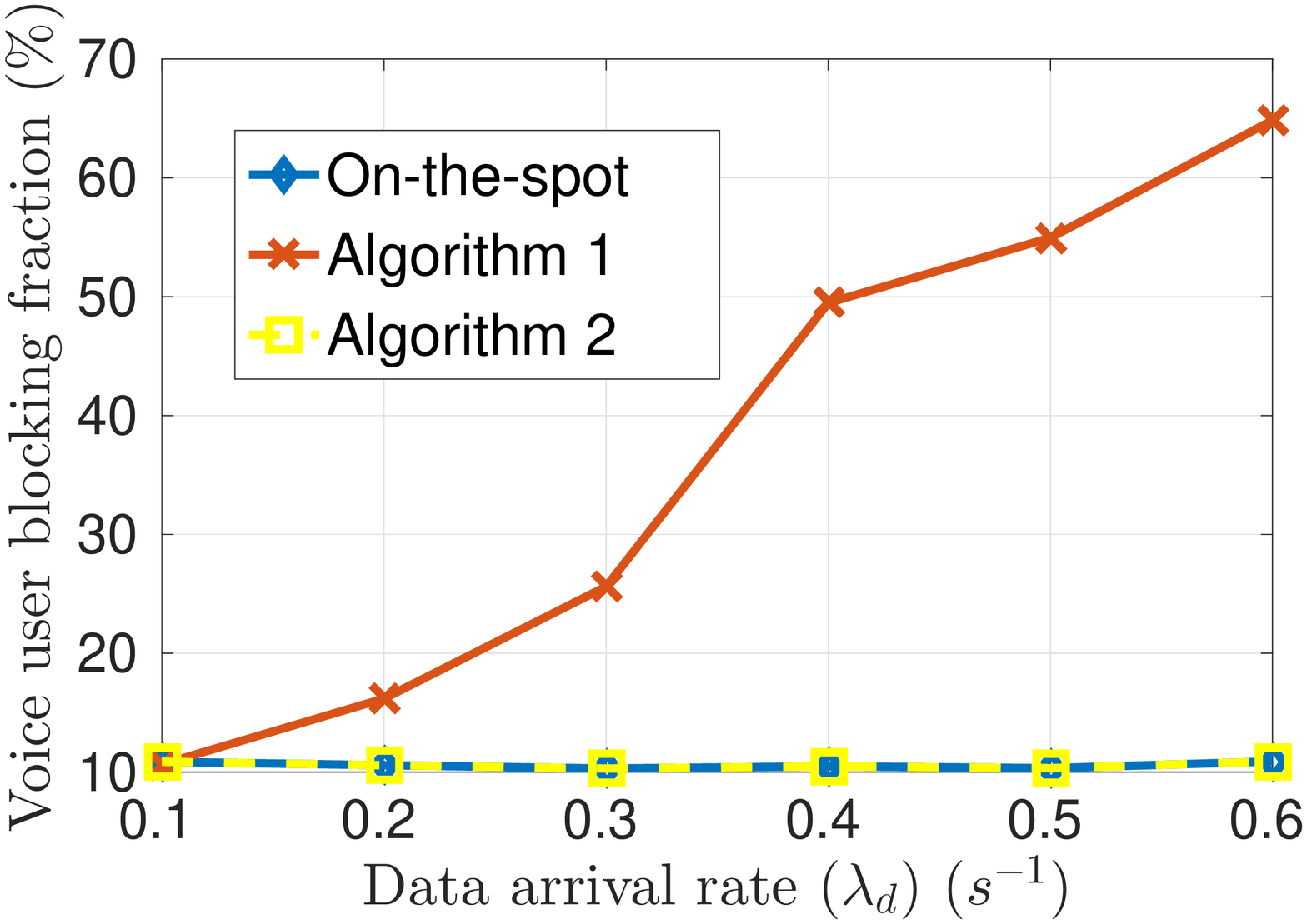}
         \caption{Voice user blocking percentage vs. $\lambda_d$ ($\lambda_v=1/6,\mu_v=1/60$ and $\mu_d=1/10$).}
         \label{fig:bfdata1}
       \end{subfigure}

\caption{Plot of blocking fraction of voice users and total system throughput for different algorithms.}
\end{figure*}
%
In on-the-spot offloading, voice users are blocked when LTE reaches the capacity. 
When $\lambda_v$ is small, 
the voice user blocking probability is small. However, as $\lambda_v$ increases, the
probability of approaching the LTE capacity and hence the voice user blocking probability increases.
The voice user blocking probability in Algorithm \ref{algo3} is small when $\lambda_v$ is small. However, as $\lambda_v$ increases, voice user 
blocking probability values become marginally higher than the corresponding values for on-the-spot offloading.
Algorithm \ref{algo3} may introduce 
blocking of voice users even when LTE has not reached its capacity, i.e., for states with $(i+j)<C$.
Voice users have very less contribution to the total system 
throughput. Hence, voice users are blocked to save resources for data users which contribute significantly
to the total system throughput. 
However, in Algorithm \ref{algo4}, the number of states with proactive blocking (blocking when $(i+j)<C$) is reduced due to a constraint on the voice user 
blocking probability.
Additionally, when $i$ is small, the optimal action in states with $(i+j)=C$ becomes $A_4$ (accept voice user in LTE and 
data offload to WiFi). Voice user blocking probability contribution comes mainly from the states with $(i+j)=C$,
where $i$ is large (say states $(C,0,0)$,$(C-1,1,0)$ etc.). 
Since a major fraction of voice user blocking occurs when $(i+j)=C$ and $i$ is large, 
the system becomes analogous to the on-the-spot
offloading. Hence, the voice user blocking probability performance of Algorithm \ref{algo4} is almost similar to on-the-spot offloading algorithm \cite{m}. 

\subsubsection[\textwidth=14 cm]{Total System Throughput Performance}
Total system throughput performance comparison of different algorithms is illustrated in Fig. \ref{fig:putvoice}.
In on-the-spot offloading, the average number of voice users
in LTE increases with $\lambda_v$, while the average number of data users in WiFi remains constant. Thus, the total system throughput increases with $\lambda_v$.
For Algorithm \ref{algo3}, 
with an increase in $\lambda_v$, the blocking probability of voice users increases. 
Therefore, the fraction of voice users
in the system decreases, and the total system throughput increases. Besides, Algorithm \ref{algo3} performs a 
significant amount of load balancing under
$A_4$ (accept voice user in LTE with data user offload to WiFi) and $A_5$ (move data user to the RAT from where a user has departed). 
With higher $\lambda_v$, load balancing actions are chosen more frequently.
Thus, with higher $\lambda_v$, Algorithm \ref{algo3}
exhibits greater improvement over on-the-spot offloading algorithm.
The improvement in total system throughput varies from $1.22\%$ 
(for $\lambda_v=0.01$) to $10.32\%$ (for $\lambda_v=0.25$).
In Fig. \ref{fig:putvoice}, we observe that Algorithm \ref{algo4} also performs better than on-the-spot
offloading. However, due to a constraint on the voice
user blocking probability,
performance improvement is lower than Algorithm \ref{algo3}. 
For lower values of $\lambda_v$ ($\lambda_v=0.01,0.07$), the total system throughput of Algorithm \ref{algo4} is same as that of Algorithm \ref{algo3}
as the optimal policy
for the CMDP is same as that of the unconstrained MDP. 
On-the-spot offloading algorithm blocks voice users only when LTE reaches capacity. Typically, in Algorithm \ref{algo4} also, 
voice user blocking occurs when the LTE is full with a large number of voice users. 
However, due to load balancing of data users,
Algorithm \ref{algo4} outperforms the on-the-spot offloading algorithm. 
With $\lambda_v=0.01$, the improvement in total system throughput is only $1.22\%$ and with $\lambda_v=0.25$, it becomes $7.60\%$.
\subsection{Data User Arrival Rate Variation}
\subsubsection{Voice User Blocking Probability Performance}
In Fig. \ref{fig:bfdata1}, 
for on-the-spot offloading, 
voice and data 
users are accepted in LTE and WiFi, respectively. Consequently, changes in $\lambda_d$ do not affect the blocking probability performance of 
voice users in LTE.
In the case of Algorithm \ref{algo3}, increase in $\lambda_d$ associates more number of data users with LTE, since the optimal policy for data users is to associate
with LTE after the number of WiFi data users crosses a certain threshold. Therefore, the number of free LTE resources for voice users reduces, eventually
increasing the blocking probability of voice users. The voice user blocking probability of Algorithm \ref{algo3} is worse than that of on-the-spot offloading and 
increases with $\lambda_d$. 
The blocking probability performance of Algorithm \ref{algo4} is similar to that of the on-the-spot offloading.
Since usually the voice users are blocked in the states where the only feasible action is blocking (say state $(C,0,0)$), the decision epochs where voice users 
are blocked are almost same as that of the on-the-spot offloading. 
\subsubsection{Total System Throughput Performance}
In Fig. \ref{fig:putdata}, total system throughputs for different algorithms are plotted as a function of $\lambda_d$. 
\begin{figure}[h]
 \begin{center}
\includegraphics[width=0.42\textwidth]{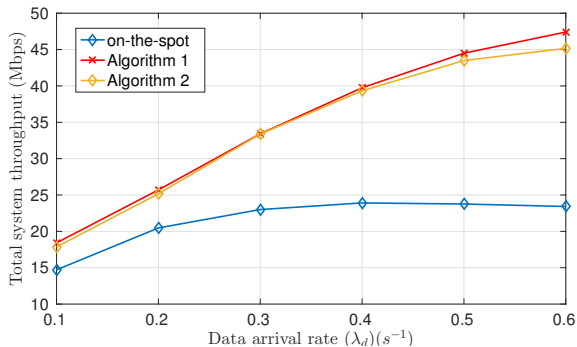}
\caption{Total system throughput vs. $\lambda_d$ ($\lambda_v=1/6,\mu_v=1/60$ and $\mu_d=1/10$).}
\label{fig:putdata}
 \end{center}
\end{figure}
In on-the-spot offloading, 
with an increase in $\lambda_d$, the number of WiFi data users increases, and this increases the total system throughput.
However, for high $\lambda_d$, the effect of contention among data users reduces the rate of increment of the total system throughput.
In Algorithm \ref{algo3}, as $\lambda_d$ increases, more number of data users are served using LTE. Since the throughput contribution of 
data users is more than voice users, the blocking probability of voice users increases with $\lambda_d$. Thus, the fraction of voice users in the system reduces,
effectively causing more improvement in the total system throughput.
When $\lambda_d=0.1$, the improvement in system metric is $25.22\%$, whereas for $\lambda_d=0.6$,
the system metric almost doubles.
In Fig. \ref{fig:putdata}, 
the total system throughput values for Algorithm \ref{algo4} are smaller 
than the corresponding 
values for Algorithm \ref{algo3}. The reduction in blocking probability of voice users comes 
at a price of the reduction in the total system throughput. 
Still, due to optimal association and load balancing decisions,
Algorithm \ref{algo4} reduces the effect of contention among data users in WiFi 
and hence performs better than on-the-spot offloading algorithm \cite{m}. 
For example, with $\lambda_d=0.1$, the improvement in system metric is about $22.96\%$ and with $\lambda_v=0.6$, it becomes almost $93\%$.

\section{Conclusion}\label{sec:cfw}
In this paper, we have formulated the optimal association problem in an LTE-WiFi HetNet as an MDP problem with an objective of maximizing the total system throughput.
Constrained MDP formulation has also been presented, where maximizing the total system throughput
is subject to a constraint on the blocking probability of voice users. 
Threshold structures on the association of voice and data users have been derived. Based on the structure of the optimal 
policies, we have proposed two algorithms for the association and offloading of voice/data users in an LTE-WiFi HetNet.
Simulation results demonstrate 
that although the voice user blocking probability performance of Algorithm \ref{algo3} is worse than that of on-the-spot offloading \cite{m}, 
Algorithm \ref{algo4} performs as good as
on-the-spot offloading. Moreover, the proposed algorithms perform better than on-the-spot offloading algorithm in 
improving the total system throughput.
In future, the considered framework can be extended to consider the channel state between LTE BS/WiFi AP and users such that channel-aware 
association and offloading decisions can be taken.
\appendices
\section{Proof of Lemma \ref{lemma1}}\label{app:a}
\begin{figure}[!htb]
 \begin{center}
\scalebox{0.30}{\input{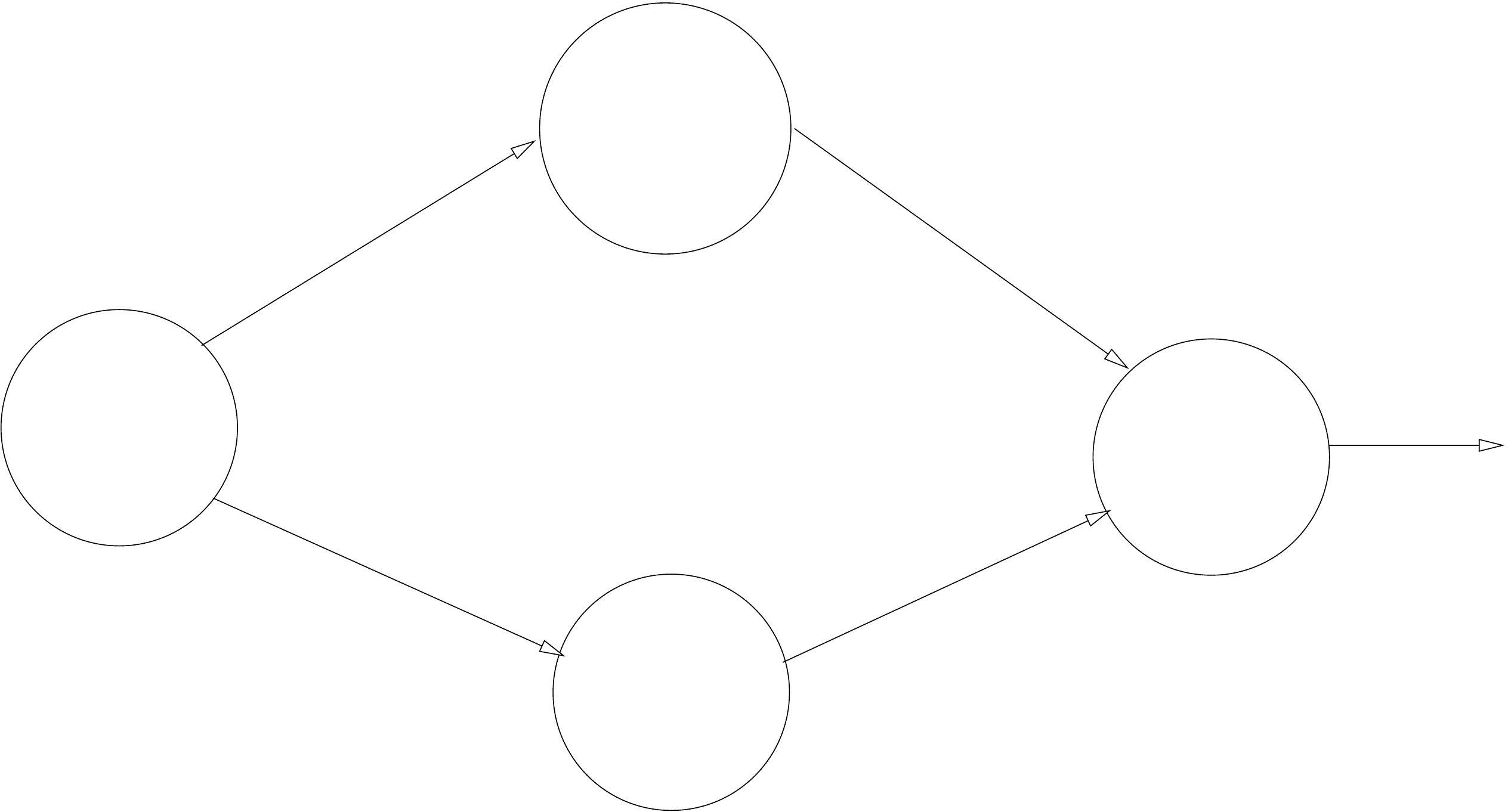_t}}
\caption{Sample path under different policies.}
\label{fig:samplepath}
 \end{center}
\end{figure}
Since the decisions of association and offloading are involved during the arrival and departure of users, proving this lemma
is equivalent to proving the following statements.
\begin{itemize}
 \item[(a)] $A_3$ (Accept in WiFi) is optimal when there are less than $k_{th}$ data users in the system, and a data user arrives.
 \item [(b)] $A_1$ (Do nothing) is optimal when there are less than or equal to $k_{th}$ data users in the system, and a voice user from LTE departs.
 \item [(c)] $A_1$ (Do nothing) is optimal when there are less than or equal to $k_{th}$ data users in the system, and a data user from WiFi departs.
\end{itemize}
We prove the above statements by sample path argument. Suppose the system starts at time $t=0$.\par
\textit{Proof of (a)}:
We consider the scenario when the system is in the state $S_1=(i,0,0)$, when a data user arrival occurs (after a time $t_1$, say).
Assume that the optimal policy $\pi^*$ does not associate this incoming data user with WiFi. Therefore, the optimal action must be $A_2$ (Accept in LTE). As the optimal policy is $\pi^*$,
we have $V^{\pi^*}(s) \ge V^{\hat{\pi}}(s)$, $\forall \hat{\pi}\in \prod$ and $\forall s\in \mathcal{S}$, where $\prod$ is the set of all policies. Let us consider another policy $\pi$ (non-stationary in general) which takes 
$A_3$ in state $S_1=(i,0,0)$. As illustrated in Fig. \ref{fig:samplepath}, let us assume that starting from the state $S_1$ and following the policy $\pi^*$ and $\pi$, the system reaches 
the state $S_2=(i,1,0)$ and $S_3=(i,0,1)$, respectively. The inter-arrival times and service times are same for both the sample paths as we have considered a Markovian system. 
Assume that from the state $S_2$, based on the next event (after a time $t_2$) and the chosen action, the system makes a transition to the state $S_4$ according to the 
policy $\pi^*$. Before reaching the state $S_2$, the sample path followed by the policy $\pi^*$ has one less WiFi data user and one more LTE data user than 
that of the
policy $\pi$ before it reaches the state $S_3$. Suppose, the policy $\pi$ is such that in state $S_3$, it takes the same action as that of policy $\pi^*$ and 
additionally offloads one data user from WiFi to LTE. Evidently, sample path followed by both the policies end up in the same state $S_4$. We construct $\pi$ in such a manner 
that from the state $S_4$ onwards, both the policies take up same actions and follow the same sample path.
Therefore, the difference of value functions of the state $S_1$ under the policy $\pi^*$ and $\pi$ is 
\begin{equation*}
V^{\pi^*}(S_1)-V^{\pi}(S_1)= R_{L,D}-R_{W,D}(1). 
\end{equation*}
Since $R_{L,D}< \tilde {R}_{W,D}(k), \forall k < k_{th}$ and $\tilde {R}_{W,D}(1)=R_{W,D}(1)$, we have, $V^{\pi^*}(S_1) < V^{\pi}(S_1)$.
Clearly, this contradicts the original claim that $\pi^*$ is an optimal policy. 
Since the Markov chains induced by different policies are 
recurrent, the state $(i,0,0)$ is visited infinitely often and each time choice of $A_3$ upon a data user arrival provides more reward than action $A_2$. 
Therefore, when 
there is no data user in the system, and one data user arrives, $A_3$ is optimal. In a similar manner,
it can be proved that $A_3$ is optimal when a data user arrives and the system is in state $(i,0,k)$, where $k< k_{th}$.\par
\textit{Proof of (b) and (c)}:These can be proved using a similar sample path argument.



\section{Proof of Lemma \ref{lemma2}}\label{app:b}
Similar to Lemma \ref{lemma1}, proving this lemma
is equivalent to proving the following statements.
\begin{itemize}
 \item[(a)] $A_2$ (Accept in LTE) is optimal when there are more than or equal to $k_{th}$ data users in the system, and one data user arrives.
 \item[(b)] $A_1$ (Do nothing) is optimal when there are more than $k_{th}$ data users in the system, and a voice/data user from LTE departs.
 \item[(c)] $A_5$ (Data offload to a RAT from where a user 
has departed) is optimal when there are more than $k_{th}$ data users in the system, and a data user from WiFi departs.
\end{itemize}
\textit{Proof of (a)}:
From Lemma \ref{lemma1}, we have, $(j+k)\le k_{th}\implies j=0$.
We consider the scenario when the system is in the state $(i,0,k_{th})$, when a data user arrival occurs.
Assume that the optimal policy $\pi^*$ does not associate this incoming data user with LTE. Consequently, the optimal action must be $A_3$. 
As the optimal policy is $\pi^*$,
we have $V^{\pi^*}(s) \ge V^{\hat{\pi}}(s)$ $\forall \hat{\pi}\in \prod$ in every state $s$. Let us consider another policy $\pi$ which chooses 
$A_2$ in state $(i,0,k_{th})$. As illustrated in Fig. \ref{fig:samplepath}, starting from the state $(i,0,{k_{th}})$ and following the policy $\pi^*$ and $\pi$, 
the system reaches the states $S_2$ and $S_3$, respectively. From the state $S_2$, based on an event, the system reaches the state $S_4$. 
Suppose, in the state $S_3$, the action followed by policy $\pi$ is such that it chooses the same action as that of policy $\pi^*$ and additionally
offloads one data user from LTE to WiFi. Clearly, path followed by both the policies end up in the same state $S_4$. We construct $\pi$ in such a way that
from the state $S_4$ onwards, both of them follow the same path.
Similar to the previous lemma, 
the difference of value functions under the policy $\pi^*$ and $\pi$ is 
\begin{equation*}
V^{\pi^*}(S_1)-V^{\pi}(S_1)=\big((k+1)R_{W,D}(k+1)-kR_{W,D}(k)-R_{L,D}\big).
\end{equation*}

Since $R_{L,D}\ge \tilde {R}_{W,D}(k), \forall k \ge k_{th}$, we have, $V^{\pi^*}(S_1) < V^{\pi}(S_1)$.
Clearly, this contradicts the original claim that $\pi^*$ is an optimal policy. 
Thus, $A_2$ is optimal when there are $k_{th}$ data 
users in WiFi, and one data user arrives. The same result can be extended for the case 
when there are $k_{th}$ data users in WiFi, more than or equal to one data user in LTE, and one data user arrives.\par
Statements (b) and (c) can be proved in a similar way.
\section{Proof of Lemma \ref{lemma5}}\label{app:c}
To prove this lemma, we consider two cases, $(1)$ $k\ge k_{th}$ and $(2)$ $k<k_{th}$. We prove the required for the first case. 
Proof of the second case follows in a similar manner.
From Lemma \ref{lemma4}, we know that for $k\ge k_{th}$, $A_2$ is better than $A_4$. Thus, for $k\ge k_{th}$, the choice is between $A_1$ 
and $A_2$. To prove this lemma, we first prove that the value function $V(i,j,k)$ is concave in $i$. In Lemma \ref{lemma1} and \ref{lemma2}, we have already derived 
the structure of the optimal
policy for data user arrival and departure of voice and data users. Now, for $k\ge k_{th}$, with the aid of this, the optimality equation 
is as follows.
\begin{equation}\label{opt3}
\begin{split}
& V(i,j,k)=\lambda_v\delta\max\lbrace f(i,j,k)-\beta+V(i,j,k),f(i+1,j,k)+V(i+1,j,k)\rbrace
+\lambda_d\delta\big(f(i,j+1,k)+V(i,j+1,k)\big)\\&
+i\mu_v\delta\big(f(i-1,j+1,k-1)+V(i-1,j+1,k-1)\big)
+j\mu_d\delta\big(f(i,j,k-1)+V(i,j,k-1)\big)\\&
+k\mu_d\delta\big(f(i,j,k-1)+V(i,j,k-1)\big)
+\big(1-v(i,j,k)\big)V(i,j,k).
 \end{split}
\end{equation}
Let the components in Equation (\ref{opt3}) be denoted by $V^1(i,j,k),V^2(i,j,k),V^3(i,j,k),
V^4(i,j,k),V^5(i,j,k)$ and $V^6(i,j,k)$, respectively.
We prove the concavity of $V(i,j,k)$ component-wise. 
Start the VIA with $V_0(i,j,k)=0$. Hence, $V_0(i,j,k)$ is concave in $i$.
Let us assume that $V_{1,n}(i,j,k)=\max\lbrace f(i,j,k)-\beta+V_{n-1}(i,j,k),f(i+1,j,k)+V_{n-1}(i+1,j,k)\rbrace$.
Equivalently, $V_{1,n}(i,j,k)=\max\lbrace -\beta+V_{n-1}(i,j,k),R_{L,V}+V_{n-1}(i+1,j,k)\rbrace$.
Let us define the function $V_{1,n}(i,j,k,a)$ as follows.
\begin{equation*}
V_{1,n}(i,j,k,a)=
 \begin{cases}
     -\beta+V_{n-1}(i,j,k),&a=A_1,\\
      R_{L,V}+V_{n-1}(i+1,j,k),&a=A_2.\\
\end{cases}
\end{equation*}
By definition,
\begin{equation*}
 V_{1,n}(i,j,k)=\max_{a\in\lbrace A_1,A_2\rbrace} V_{1,n}(i,j,k,a).
\end{equation*}
Thus, we have, 
\begin{equation*}
 V^1(i,j,k)=\lim_{n\to\infty} V_{1,n}(i,j,k).
\end{equation*}
Let us define
$D_iV(i,j,k,a)=V(i+1,j,k,a)-V(i,j,k,a)$.
\begin{equation*}
D_i V_{1,n}(i,j,k,a)=
 \begin{cases}
        D_i V_{n-1}(i,j,k),&a=A_1,\\
        D_i V_{n-1}(i+1,j,k),&a=A_2.\\
\end{cases}
\end{equation*}
\begin{equation*}
D_{ii} V_{1,n}(i,j,k,a)=
 \begin{cases}
  D_{ii} V_{n-1}(i,j,k),&a=A_1,\\
     D_{ii} V_{n-1}(i+1,j,k),&a=A_2.\\
\end{cases}
\end{equation*}
Since $V_{n-1}(i,j,k)$ is concave in $i$, $V_{1,n}(i,j,k,a)$ is concave in $i$.\par
Now, we need to prove that $V_{1,n}(i,j,k)$ is concave in $i$.
In other words, we need to prove that 
$V_{1,n}(i+2,j,k)+V_{1,n}(i,j,k)\le 2V_{1,n}(i+1,j,k)$.
Let us assume that $a_1\in \lbrace A_1,A_2 \rbrace$ and $a_2\in \lbrace A_1,A_2 \rbrace$ are the maximizing actions in states $(i+2,j,k)$ and $(i,j,k)$, 
respectively.
Therefore,
\begin{equation*}
\begin{split}
&2V_{1,n}(i+1,j,k) \ge V_{1,n}(i+1,j,k,a_1)+V_{1,n}(i+1,j,k,a_2)\\&
=V_{1,n}(i+2,j,k,a_1)+V_{1,n}(i,j,k,a_2)
-D_i V_{1,n}(i+1,j,k,a_1)+D_i V_{1,n}(i,j,k,a_2).
\end{split}
\end{equation*}
Let us take $X=D_i V_{1,n}(i,j,k,a_2)-D_i V_{1,n}(i+1,j,k,a_1)$. To prove that $V_{1,n}(i,j,k)$ is concave in $i$, we need to prove that $X\ge 0$.
There are four cases as described below.\\
$\text{Case } 1: a_1=a_2=A_1,$\\
\begin{equation*}
\begin{split}
& X=D_i V_{n-1}(i,j,k)-D_i V_{n-1}(i+1,j,k)\\&=-D_{ii} V_{n-1}(i,j,k) \ge 0.
\end{split}
\end{equation*}
$\text{Case } 2: a_1=A_1,a_2=A_2,$\\
\begin{equation*}
\begin{split}
 X=D_i V_{n-1}(i+1,j,k)-D_i V_{n-1}(i+1,j,k)= 0.
 \end{split}
\end{equation*}
$\text{Case } 3: a_2=a_2=A_2,$\\
\begin{equation*}
\begin{split}
& X=D_i V_{n-1}(i+1,j,k)-D_i V_{n-1}(i+2,j,k)\\&=-D_{ii} V_{n-1}(i+1,j,k) \ge 0.
 \end{split}
\end{equation*}
$\text{Case } 4: a_1=A_2,a_2=A_1,$\\
\begin{equation*}
\begin{split}
& X=D_i V_{n-1}(i,j,k)-D_i V_{n-1}(i+2,j,k)\\&=-D_{ii} V_{n-1}(i,j,k) -D_{ii} V_{n-1}(i+1,j,k) \ge 0.
\end{split}
\end{equation*}
Thus, it is proved that $V_{1,n}(i,j,k)$ is concave in $i$.
Since this holds for every $n$ and every $\beta$, $V^1(i,j,k)$ is concave in $i$.

Similarly, in the case of the second component, let $V_{2,n}(i,j,k)=f(i+1,j,k)+V_{n-1}(i,j+1,k)$.

Thus, $D_{ii} V_{2,n}(i,j,k)= D_{ii} V_{n-1}(i,j+1,k).$
Therefore, $V_{2,n}(i,j,k)$ is concave in $i$.
Similarly, other components also can be proved to be concave in $i$. Therefore, $V(i,j,k)$ is concave in $i$.\par
Let us define $x(i,j,k)=-\beta-R_{L,V}$.
In order to prove this lemma, we know that if state $(i,j,k)$ is blocking, then $V(i+1,j,k)-V(i,j,k) \le x(i,j,k).$
Due to concavity of $V(i,j,k)$, $V(i+2,j,k)-V(i+1,j,k)\le V(i+1,j,k)-V(i,j,k)$. Now, $x(i,j,k)=x(i+1,j,k)$. 
As a consequence, $V(i+2,j,k)-V(i+1,j,k) \le x(i+1,j,k).$
Thus, it is proved that if state $(i,j,k)$ is blocking, then the state $(i+1,j,k)$ is also blocking.\par
To prove that if state $(i,j,k)$ is blocking, then the state $(i,j+1,k)$ is also blocking, we first need to prove that the value function is 
submodular in $(i,j)$. In other words, we need to prove that
$V_n(i+1,j,k)+V_n(i,j+1,k)\ge V_n(i,j,k)+V_n(i+1,j+1,k)$. Similar to the previous proof, we prove the above statement component-wise. 
Let us assume that $a_1$ and $a_2$ are the maximizing actions in states $(i,j,k)$ and $(i+1,j+1,k)$, respectively.
Start the VIA with $V_0(i,j,k)=0$. Therefore, $V_0(i,j,k)$ is submodular in $(i,j)$. In other words, $D_{ij}V_0(i,j,k)\le 0$.
We have,
\begin{equation*}
 \begin{split}
&V_{1,n}(i+1,j,k)+V_{1,n}(i,j+1,k)\ge V_{1,n}(i+1,j,k,a_1)+V_{1,n}(i,j+1,k,a_2)\\&
=V_{1,n}(i,j,k,a_1)+V_{1,n}(i+1,j+1,k,a_2)+D_i V_{1,n}(i,j,k,a_1)-D_i V_{1,n}(i,j+1,k,a_2).
 \end{split}
\end{equation*}
Now, we consider four possible cases.\\
$\text{Case } 1: a_1=a_2=A_1,$\\
\begin{equation*}
\begin{split}
& D_i V_{1,n}(i,j,k,a_1)-D_i V_{1,n}(i,j+1,k,a_2)\\&=D_i V_{n-1}(i,j,k)-D_i V_{n-1}(i,j+1,k)\\&=-D_{ij} V_{n-1}(i,j,k)\ge 0.
 \end{split}
\end{equation*}
$\text{Case } 2:a_1=a_2=A_2,$\\
\begin{equation*}
\begin{split}
& D_i V_{1,n}(i,j,k,a_1)-D_i V_{1,n}(i,j+1,k,a_2)\\&=D_i V_{n-1}(i+1,j,k)-D_i V_{n-1}(i+1,j+1,k)\\&=-D_{ij} V_{n-1}(i+1,j,k)\ge 0.
 \end{split}
\end{equation*}
$\text{Case } 3: a_1=A_1, a_2=A_2,$\\
\begin{equation*}
\begin{split}
& D_i V_{1,n}(i,j,k,a_1)-D_i V_{1,n}(i,j+1,k,a_2)\\&=D_i V_{n-1}(i,j,k)-D_i V_{n-1}(i+1,j+1,k)\\&=D_i V_{n-1}(i,j,k)-D_i V_{n-1}(i,j+1,k)\\&+D_i V_{n-1}(i,j+1,k)
-D_i V_{n-1}(i+1,j+1,k)\\&=-D_{ij} V_{n-1}(i,j,k)-D_{ii} V_{n-1}(i,j+1,k)\ge 0.
 \end{split}
\end{equation*}
$\text{Case } 4: a_1=A_2, a_2=A_1,$\\
\begin{equation*}
\begin{split}
&V_{1,n}(i+1,j,k)+V_{1,n}(i,j+1,k)\\&\ge V_{1,n}(i+1,j,k,a_2)+V_{1,n}(i,j+1,k,a_1)\\&
=-\beta+V_{n-1}(i+1,j,k)+R_{L,V}+V_{n-1}(i+1,j+1,k)\\&
=V_{1,n}(i,j,k,2)+V_{1,n}(i+1,j+1,k,1)\\&
=V_{1,n}(i,j,k)+V_{1,n}(i+1,j+1,k).
 \end{split}
\end{equation*}
Thus, it is proved that $V_{1,n}(i,j,k)$ is submodular in $(i,j)$.\par
Similarly, in the case of the second component, we have, $V_{2,n}(i,j,k)=f(i,j+1,k)+V_n(i,j+1,k)$.
Therefore, we have, $D_{ij}V_{2,n}(i,j,k)=D_{ij}V_n(i,j+1,k)\le 0$.
Similarly, other components also can be proved to be submodular in $(i,j)$.
Therefore, the value function is submodular in $(i,j)$.\par
Now, if state $(i,j,k)$ is blocking then we have, $V(i+1,j,k)-V(i,j,k) \le x(i,j,k).$ Again, we have,
$x(i,j,k)=x(i,j+1,k)$. Due to submodularity, we have $V(i+1,j+1,k)-V(i,j+1,k)\le V(i+1,j,k)-V(i,j,k) \le x(i,j,k)= x(i,j+1,k)$.
Thus, in the case of voice arrival, if $A_1$ is optimal in state $(i,j,k)$, then in state $(i,j+1,k)$ also $A_1$ is optimal.\par
Proof of $(ii)$ follows directly from the proof of part $(i)$.
\section{Proof of Lemma \ref{lemma6}}\label{app:d}
To prove this lemma, we consider two cases, $(1)$ $k\ge k_{th}$ and $(2)$ $k<k_{th}$. We demonstrate the proof of the lemma for the first case. 
Proof of the second case follows in similar manner.
To prove this lemma, we first need to prove that for $(i+j)=C$, the difference of value functions $V(i+1,j-1,k+1)-V(i,j,k)$ is decreasing in $i$. 
For $(i+j)=C$ and $k \ge k_{th}$, 
the optimality equation 
can be described as
\begin{equation*}
\begin{split}
& V(i,j,k)=\lambda_v\delta\max\lbrace f(i,j,k)-\beta+V(i,j,k),f(i+1,j-1,k+1)+V(i+1,j-1,k+1)\rbrace\\&
+\lambda_d\delta\big( f(i,j,k+1)+V(i,j,k+1)\big)
+i\mu_v\delta\big(f(i-1,j+1,k-1)+V(i-1,j+1,k-1)\big)\\&
+j\mu_d\delta\big(f(i,j,k-1)+V(i,j,k-1)\big)
+k\mu_d\delta\big(f(i,j,k-1)+V(i,j,k-1)\big)\\&
+\big(1-v(i,j,k)\big)V(i,j,k).
 \end{split}
\end{equation*}

Let us assume that $V_{1,n}(i,j,k,a)=\max\lbrace f(i,j,k)-\beta+V_{n-1}(i,j,k),f(i+1,j-1,k+1)+V_{n-1}(i+1,j-1,k+1)\rbrace$.\\
Equivalently, $V_{1,n}(i,j,k,a)=\max\lbrace-\beta+ V_{n-1}(i,j,k),\\R_{L,V}-R_{L,D}+\tilde{R}_{W,D}(k)+V_{n-1}(i+1,j-1,k+1)\rbrace$.

In other words, 
\begin{equation*}
V_{1,n}(i,j,k,a)=
 \begin{cases}
   -\beta +V_{n-1}(i,j,k),&a=A_1,\\
     R_{L,V}-R_{L,D}+\tilde{R}_{W,D}(k)+\\V_{n-1}(i+1,j-1,k+1),&a=A_4.\\
\end{cases}
\end{equation*}
We prove the above claim component-wise. 
Start the VIA with $V_0(i,j,k)=0$. Therefore, $V_0(i+1,j-1,k+1)-V_0(i,j,k)$ is decreasing in $i$. We also have, $E_iF_iV_0(i,j,k)\le 0$.
Now,
Let us define
$E_iV(i,j,k,a)=V(i+1,j-1,k+1,a)-V(i,j,k,a)$ and $F_iV(i,j,k,a)=V(i+1,j-1,k,a)-V(i,j,k,a)$.

\begin{equation*}
E_i V_{1,n}(i,j,k,a)=
 \begin{cases}
   E_i V_{n-1}(i,j,k),&a=A_1,\\
   E_i V_{n-1}(i+1,j-1,k+1),&a=A_4.\\
\end{cases}
\end{equation*}
Therefore, $E_iF_i V_{1,n}(i,j,k,a)\le 0$.\\
\begin{equation*}
E_{ii} V_{1,n}(i,j,k,a)=
 \begin{cases}
   E_{ii} V_{n-1}(i,j,k),&a=A_1,\\
     E_{ii} V_{n-1}(i+1,j-1,k+1),&a=A_4.\\
\end{cases}
\end{equation*}
Therefore, $V_{1,n}(i+1,j-1,k+1,a)-V_{1,n}(i,j,k,a)$ is decreasing in $i$.\\
Now, we need to prove that $V_{1,n}(i+1,j-1,k+1)-V_{1,n}(i,j,k)$ is decreasing in $i$.
In other words, we need to prove that 
$V_{1,n}(i+2,j-2,k+1)+V_{1,n}(i,j,k)\le V_{1,n}(i+1,j-1,k+1)+V_{1,n}(i+1,j-1,k)$.
Let us assume that $a_1\in \lbrace A_1,A_4 \rbrace$ and $a_2\in \lbrace A_1,A_4 \rbrace$ are the maximizing actions in states $(i+2,j-2,k+1)$ and $(i,j,k)$, 
respectively.
Therefore,
\begin{equation*}
\begin{split}
&V_{1,n}(i+1,j-1,k+1)+V_{1,n}(i+1,j-1,k)\\&\ge V_{1,n}(i+1,j-1,k+1,a_2)+V_{1,n}(i+1,j-1,k,a_1)\\&
=V_{1,n}(i,j,k,a_2)+V_{1,n}(i+2,j-2,k+1,a_1)\\&
-E_i V_{1,n}(i+1,j-1,k,a_1)+E_i V_{1,n}(i,j,k,a_2).
\end{split}
\end{equation*}
Let us take $Y=E_i V_{1,n}(i,j,k,a_2)-E_i V_{1,n}(i+1,j-1,k,a_1)$. To prove that $V_{1,n}(i+1,j-1,k+1)-V_{1,n}(i,j,k)$ is decreasing in $i$, we need to 
prove that $Y\ge 0$.
There are four cases as described below.\\
$\text{Case } 1: a_1=a_2=A_1,$\\
\begin{equation*}
\begin{split}
& Y=E_i V_{n-1}(i,j,k)-E_i V_{n-1}(i+1,j-1,k)\\&=-E_i F_i V_{n-1}(i,j,k) \ge 0.
\end{split}
\end{equation*}
$\text{Case } 2: a_1=A_4,a_2=A_1,$\\
\begin{equation*}
\begin{split}
 &Y=E_i V_{n-1}(i,j,k)-E_i V_{n-1}(i+2,j-2,k+1)\\&= -E_{ii}V_{n-1}(i,j,k)-E_iF_i(i+1,j-1,k+1)\ge 0.
 \end{split}
\end{equation*}
$\text{Case } 3: a_2=a_2=A_4,$\\
\begin{equation*}
\begin{split}
& Y=E_i V_{n-1}(i+1,j-1,k+1)-E_i V_{n-1}(i+2,j-2,k+1)\\&=-E_iF_i V_{n-1}(i+1,j-1,k+1) \ge 0.
 \end{split}
\end{equation*}
$\text{Case } 4: a_1=A_1,a_2=A_4,$\\
\begin{equation*}
\begin{split}
&V_{1,n}(i+1,j-1,k+1)+V_{1,n}(i+1,j-1,k)\\&\ge V_{1,n}(i+1,j-1,k+1,a_1)+V_{1,n}(i+1,j-1,k,a_2)\\&
=-\beta+V_{n-1}(i+1,j-1,k+1)+R_{L,V}-R_{L,D}+\\&\tilde{R}_{W,D}(k)+V_{n-1}(i+2,j-2,k+1)\\&
=V_{1,n}(i+2,j-2,k+1,1)+V_{1,n}(i,j,k,4)\\&
=V_{1,n}(i+2,j-2,k+1)+V_{1,n}(i,j,k).
\end{split}
\end{equation*}
Thus, it is proved that $V_{1,n}(i+1,j-1,k+1)-V_{1,n}(i,j,k)$ is decreasing in $i$.
Since this holds for every $n$ and every value of $\beta$, $V^1(i+1,j-1,k+1)-V^1(i,j,k)$ is decreasing in $i$.

Let $V_{2,n}(i,j,k)=f(i,j,k+1)+V_{n-1}(i,j,k+1)$.
Thus, $E_{ii} V_{2,n}(i,j,k)= E_{ii} V_{n-1}(i,j,k+1).$
Therefore, $V_{2,n}(i+1,j-1,k+1)-V_{2,n}(i,j,k)$ is decreasing in $i$.
Similarly, other components can be proved to be decreasing in $i$. Therefore, $V(i+1,j-1,k+1)-V(i,j,k)$ is decreasing in $i$.\par
Similar to Lemma \ref{lemma5}, using this property it can be proved that if the optimal action for voice user arrival in state $(i,j,k)$ is
blocking, then the optimal action in state $(i + 1,j-1,k)$
is also blocking.\par
Proof of $(ii)$ follows directly from the proof of part $(i)$.



%


\section*{Acknowledgment}
This work is funded by the Department of Electronics and Information Technology (DeitY), Government of India.


\ifCLASSOPTIONcaptionsoff
  \newpage
\fi


\begin{thebibliography}{100}

\bibitem{hetnet} F. Rebecchi, M. D. de Amorim, V. Conan, A. Passarella, R. 
Bruno, and M. Conti, ``Data Offloading Techniques in Cellular Networks: a Survey,'' IEEE Communications Surveys \& Tutorials, Vol. 17, No. 2, pp. 580-603, 
Nov 2014.
\bibitem{cisco} Cisco, Cisco Visual Networking Index: Global Mobile Data Traffic Forecast Update, 2016–2021, Mar 2017.

\bibitem{ieee} IEEE 802.11-2012, Part 11, ``Wireless LAN Medium Access Control (MAC) and Physical Layer (PHY) Specifications,'' Mar 2012.

\bibitem{interworking} 3GPP TS 23.234 v12.0.0, ``3GPP System to Wireless Local Area Network (WLAN) Interworking; System Description,'' Available online 
at \url{http://www.3gpp.org/DynaReport/23234.htm}, Sep 2014.
\bibitem{mikhail} M. Gerasimenko, N. Himayat, S.P. Yeh, S. Talwar, S. Andreev, and Y. Koucheryavy, ``Characterizing Performance of Load-aware Network 
Selection in Multi-radio (WiFi/LTE) Heterogeneous Networks,'' in GLOBECOM Workshop, pp. 397-402, Dec 2013.
\bibitem{farah} F. Moety, M. Ibrahim, S. Lahoud, and K. Khawam, ``Distributed Heuristic Algorithms
for RAT Selection in Wireless Heterogeneous Networks,'' in proc. of WCNC, pp. 2220 - 2224, Apr 2012.
\bibitem{yassin} M. Yassin, M. Ibrahim, and S. Lahoud, ``A Hybrid Approach for RAT Selection in Wireless Heterogeneous Networks,'' 
in proc. of ICCIT, pp. 290-294, Jun 2013.
\bibitem{game} E. Aryafar, A. Keshavarz-Haddad, M. Wang, and M. Chiang, ``RAT Selection Games in HetNets,'' 
in proc. of INFOCOM, pp. 998-1006, Apr 2013.
\bibitem{g} G. Bianchi, ``Performance Analysis of the IEEE 802.11 Distributed Coordination Function,'' IEEE Journal on Selected Areas in Communications, 
Vol. 18, No. 3, pp. 535-547, Mar 2000.
\bibitem{ns3} http://code.nsnam.org/ns-3-dev/. 
\bibitem{m} F. Mehmeti and T. Spyropoulos, ``Performance Analysis of “On-the-spot” Mobile Data Offloading,'' in proc. of GLOBECOM, pp. 1577 - 1583, Dec 2013. 
\bibitem{a} D. Pacheco-Paramo, V. Pla, V. Casares-Giner, and J. Martinez-Bauset, ``Optimal Radio Access Technology Selection on Heterogeneous
Networks,'' Physical Communication, Vol. 5, Issue 3, pp. 253-271, Sep 2012.
\bibitem{h} A. Kumar and V. Kumar, ``Optimal Association of Stations and APs in an IEEE 802.11 WLAN,'' in proc. of NCC, pp. 1-5, Jan 2005.
\bibitem{k} G.S. Kasbekar, P. Nuggehalli, and J. Kuri, ``Online Client-AP Association in WLANs,'' in proc. of WiOpt, pp. 1-8, Apr 2006. 
\bibitem{l} D. Kumar, E. Altman, and J. Kelif, ``Globally Optimal User-Network Association in an 802.11 WLAN \& 3G UMTS Hybrid Cell,'' in proc. of ITC, 
pp. 1173-1187, Jun 2007.

\bibitem{revenue} E. Khloussy, X. Gelabert, and Y. Jiang, ``A Revenue-Maximizing Scheme for Radio Access Technology Selection in Heterogeneous Wireless Networks with User Profile 
Differentiation,'' Advances in Communication Networking, Springer, pp. 66-77, 2013.
\bibitem{helou} M. El Helou, M. Ibrahim, S. Lahoud, K. Khawam, D. Mezher, and B. Cousin, ``A Network-assisted Approach for RAT Selection in Heterogeneous Cellular Networks,'' IEEE Journal 
on Selected Areas in Communications, Vol. 33, No. 6, pp. 1055–1067, Jun 2015.
\bibitem{song} W. Song, Y. Cheng, and W. Zhuang, ``Improving Voice and Data Services in Cellular/WLAN Integrated Networks by Admission Control,''
IEEE Transactions on Wireless Communications, Vol. 6, No. 11, pp. 4025-4037, Nov 2007.

\bibitem{wlanfirst} W. Song, H. Jiang, W. Zhuang, and A. Saleh, ``Call Admission Control for Integrated Voice/data Services in 
Cellular/WLAN Interworking,'' in proc. of ICC, pp. 5480-5485, Jun 2006.
\bibitem{wlandata} W. Song and W. Zhuang, ``QoS Provisioning via Admission Control in Cellular/wireless LAN Interworking,'' in Proc. of BROADNETS, 
pp. 585-592, Oct 2005. 
\bibitem{markovian} X. Gelabert, J. Pérez-Romero, O. Sallent, and R. Agustí, ``A Markovian Approach to Radio Access Technology Selection in Heterogeneous 
Multiaccess/multiservice Wireless Networks,'' IEEE Transactions on Mobile Computing, Vol. 7, No. 10, pp. 1257-1270, Oct 2008.

\bibitem{q} A. Roy and A. Karandikar ``Optimal Radio Access Technology Selection Policy for LTE-WiFi Network,'' in proc. of WiOpt, pp. 291 - 298, May 2015.
\bibitem{altman1} E. Altman, T. Jiménez, and G. Koole, ``On Optimal Call Admission Control in a Resource-Sharing System,'' IEEE Transactions on 
Communications, Vol. 49, No. 9, pp. 1659-1668, Sep 2001.
\bibitem{brouns} G.A. Brouns and J. Van Der Wal, ``Optimal threshold policies in a two-class preemptive priority queue with admission and termination 
control,'' Queueing Systems, Vol. 54, No. 1, pp. 21-33, 2006.
\bibitem{turhan1} A. Turhan, M. Alanyali, and D. Starobinski, ``Optimal admission control in two-class preemptive loss systems," 
Operations Research Letters, Vol. 40, No. 6, pp. 510-515, 2012.
\bibitem{n} K. Lee, J. Lee, Y. Yi, I. Rhee, and S. Chong, ``Mobile Data Offloading: How Much Can WiFi Deliver?,'' IEEE/ACM Transactions on Networking, Vol. 21, 
No. 2, pp. 536-550, Apr 2013.
\bibitem{i} S. Ranjan, N. Akhtar, M. Mehta, and A. Karandikar, ``User-based Integrated Offloading Approach for 3GPP LTE-WLAN Network,'' in proc. of NCC, 
pp. 1-6, Feb 2014.
\bibitem{b} B. H. Jung, N. Song, and D.K. Sung, ``A Network-assisted User-centric WiFi-Offloading Model for Maximizing Per-user Throughput in a
Heterogeneous Network,'' IEEE Transactions on Vehicular Technology, Vol. 63, Issue 99, pp. 1940 - 1945, Oct 2013.

\bibitem{internet} T. Bonald and J. W. Roberts, ``Internet and the Erlang formula,'' ACM SIGCOMM Computer Communication Review, Vol. 42, No. 1, pp. 23-30,
Jan 2012.
\bibitem{c} M. L. Putterman, Markov Decision Processes: Discrete Stochastic Dynamic Programming. John Wiley and Sons, 1994.
\bibitem{p} E. Altman, Constrained Markov decision processes, CRC Press, 1999.
\bibitem{e} D. V. Djonin and V. Krishnamurthy, ``Q-Learning Algorithms for Constrained Markov Decision Processes With Randomized 
Monotone Policies: Application to MIMO Transmission Control,'' IEEE Transactions on Signal Processing, Vol. 55, No. 5,  pp. 2170-2181, May 2007.
\bibitem{r} 3GPP TR 36.814 V9.0.0, ``Further Advancements for E-UTRA Physical Layer Aspects,'' Available online 
at http://www.3gpp.org/dynareport/36814.htm, Mar 2010.
\bibitem{s} 3GPP TR 36.839 V11.1.0, ``Mobility Enhancements in Heterogeneous Networks,'' Available online 
at http://www.3gpp.org/dynareport/36839.htm, Dec 2012.

\end{thebibliography}
\end{document}